\newtheorem{theorem}{Theorem}
\newtheorem{lemma}[theorem]{Lemma}
\theoremstyle{definition}
\newtheorem{definition}[theorem]{Definition}
\newtheorem{remark}[theorem]{Remark}
\newcommand{\tinyspace}{\mspace{1mu}}
\newcommand{\op}[1]{\operatorname{#1}}
\newcommand{\tr}{\operatorname{Tr}}
\newcommand{\im}{\operatorname{im}}
\newcommand{\abs}[1]{\lvert #1 \rvert}
\newcommand{\ip}[2]{\langle #1 , #2\rangle}
\newcommand{\ket}[1]{\lvert\tinyspace #1 \tinyspace \rangle}
\newcommand{\bra}[1]{\langle\tinyspace #1 \tinyspace \rvert}
\newcommand{\I}{\mathds{1}}
\newcommand{\setft}[1]{\mathrm{#1}}
\newcommand{\Density}{\setft{D}}
\newcommand{\Pos}{\setft{Pos}}
\newcommand{\Proj}{\setft{Proj}}
\newcommand{\Unitary}{\setft{U}}
\newcommand{\Herm}{\setft{Herm}}
\newcommand{\Lin}{\setft{L}}
\newcommand{\complex}{\mathbb{C}}
\newcommand{\integer}{\mathbb{Z}}
\newcommand{\reg}[1]{\mathsf{#1}}
\newcommand\X{\mathcal{X}}
\newcommand\Y{\mathcal{Y}}
\newcommand\Z{\mathcal{Z}}
\newcommand\W{\mathcal{W}}
\newcommand\V{\mathcal{V}}
\renewcommand\S{\mathcal{S}}
\newcommand\T{\mathcal{T}}
\newcommand\K{\mathcal{K}}
\renewcommand\H{\mathcal{H}}
\begin{document}

\title{\bf Revisiting the simulation of quantum Turing machines by quantum
  circuits}

\author{
  {\large Abel Molina \quad and \quad John Watrous}\\[2mm]
  {\it Institute for Quantum Computing and School of Computer Science}\\
  {\it University of Waterloo}}

\date{August 5, 2018}

\maketitle

\begin{abstract}
  Yao (1993) proved that quantum Turing machines and uniformly generated
  quantum circuits are polynomially equivalent computational models:
  $t \geq n$ steps of a quantum Turing machine running on an input of length
  $n$ can be simulated by a uniformly generated family of quantum circuits with
  size quadratic in $t$, and a polynomial-time uniformly generated family of
  quantum circuits can be simulated by a quantum Turing machine running in
  polynomial time.
  We revisit the simulation of quantum Turing machines with uniformly
  generated quantum circuits, which is the more challenging of the two
  simulation tasks, and present a variation on the simulation method
  employed by Yao together with an analysis of it.
  This analysis reveals that the simulation of quantum Turing machines can be
  performed by quantum circuits having depth linear in $t$, rather than
  quadratic depth, and can be extended to variants of quantum Turing machines,
  such as ones having multi-dimensional tapes.
  Our analysis is based on an extension of a method of Arrighi, Nesme, and
  Werner (2011) that allows for the localization of causal unitary evolutions.
\end{abstract}

\section{Introduction}
\label{sec:introduction}

The Turing machine model of computation, proposed by Turing in his landmark
1937 paper \cite{Turing37}, is a cornerstone of computability and complexity
theory: it provides a simple and clean mathematical abstraction of what it
means to compute, upon which a rigorous theory of computation may be
constructed.
It is therefore natural that in the early days of quantum computing,
researchers investigated quantum variants of the Turing machine model, and used
these variants as a foundation from which to study the power of quantum
computing \cite{Deutsch85,DeutschJ92,BernsteinV93,BernsteinV97,Simon94,
  Simon97,Shor94,AdlemanDH97}.

In retrospect, however, it is reasonable to consider the quantum Turing machine
model to be a rather cumbersome model, and not a particularly effective tool
with which to reason that quantum algorithms can be efficiently implemented on
a quantum computer.
The quantum circuit model offers a more usable alternative.
Quantum circuits were first described by Deutsch in 1989 \cite{Deutsch89},
although the now standard acyclic variant of the quantum circuit model was
proposed and investigated a few years later by Yao \cite{Yao93}.
Yao considered the complexity theoretic aspects of quantum circuits, which were
ignored by Deutsch for the most part, proving that (up to a polynomial
overhead) quantum Turing machines and uniformly generated quantum circuits
(i.e., families of quantum circuits that can be efficiently described by
classical Turing machine computations) are equivalent in computational power.
Specifically, Yao proved that $t \geq n$ steps of a quantum Turing machine
running on an input of length $n$ can be simulated by a uniformly generated
family of quantum circuits with size quadratic in $t$, and that a
polynomial-time uniformly generated family of quantum circuits can be simulated
by a quantum Turing machine running in polynomial time.
By the mid- to late-1990s, quantum circuits effectively supplanted quantum
Turing machines as the computational model of choice in the study of quantum
algorithms and complexity theory---a shift made possible by Yao's proof that
the models are equivalent.
The simulation of quantum circuit families by quantum Turing machines is quite
straightforward, and for this reason we will not discuss it further and will
instead focus on the simulation of quantum Turing machines by quantum
circuits.

In this paper we present a variation on Yao's simulation method; the essential
idea behind the simulation we present is the same as Yao's, but the technical
details are somewhat different.
We do not claim that our simulation achieves a quantitative improvement over
Yao's simulation, but we believe nevertheless that there is value in an
alternative simulation and analysis, and also in a discussion that fills in
some of the details absent from Yao's original paper (which appeared only as
an extended abstract in a conference proceedings).
One small advantage of our simulation is that it allows one to essentially
read off an explicit description of the quantum circuits that perform the
simulation from a simple formula, whereas Yao's simulation requires that
one uses linear algebra to solve for a suitable circuit description.
We observe that the simulation of quantum Turing machines by quantum circuits
can be parallelized, resulting in quantum circuits having depth linear in~$t$
rather than depth quadratic in $t$, while still having size quadratic in $t$.
(This is true of our simulation, and although it is not the case for the
precise simulation presented by Yao, it is not difficult to achieve a similar
parallelization by slightly modifying his simulation.)
We also observe that both simulations can be extended to variants of quantum
Turing machines such as ones having multi-dimensional tapes.

Our analysis is based on an extension of a result of Arrighi, Nesme, and Werner
\cite{ArrighiNW11} that allows for the localization of causal unitary
evolutions. 
This extension concerns unitary evolutions that are only causal when restricted
to certain subspaces, and may potentially find other uses, as might also our
introduction of a model of quantum Turing machines with a finite tape loop.

\subsection*{Paper organization}

The remainder of this paper is organized as follows.
First, in Section~\ref{sec:DTM}, we discuss the classic simulation of
deterministic Turing machines by Boolean circuits.
While this simulation cannot be applied directly to quantum Turing machines,
it is useful to refer to it and to view the method as a foundation of the
quantum simulation.
In Section~\ref{sec:QTM} we discuss the quantum Turing machine model at
a formal level, and also introduce the notion of a quantum Turing machine with
a finite tape loop, which simplifies somewhat the study of bounded computations
of quantum Turing machines.
In Section~\ref{sec:localizability} we discuss the localization of causal
unitary evolutions, as described by Arrighi, Nesme, and Werner
\cite{ArrighiNW11}, and prove an extension of their result to unitary
evolutions that are only causal when restricted to certain subspaces.
While it is key to the simulation we consider, this section is completely
independent of the notion of quantum Turing machines, and might potentially be
useful in other contexts.
In Section~\ref{sec:simulation} we present and analyze a simulation of
quantum Turing machines by quantum circuits, compare the simulation with Yao's
original simulation, and briefly discuss how the simulation can be applied to
variants of quantum Turing machines.

\section{The classic Boolean circuit simulation of deterministic Turing
  machines} \label{sec:DTM}

To explain our variation on Yao's method for simulating quantum Turing machine
computations with quantum circuits, and the challenges that both simulations
overcome, it is helpful to recall the classic simulation of deterministic
Turing machines by Boolean circuits.
This discussion also serves as an opportunity to introduce some notation that
will be useful when discussing quantum circuit simulations of quantum Turing
machines.
This classical simulation method, variants of which appear in
\cite{Savage72, PippengerF79} and as standard material in textbooks on
computational complexity, can informally be described as a ``solid state''
implementation of a Turing machine.

For the sake of simplicity, we will assume that the deterministic Turing
machine to be simulated has state set $Q = \{1,\ldots,m\}$ and tape alphabet
$\Gamma = \{0,\ldots,k-1\}$, where the tape symbol 0 represents the blank
symbol.
The tape is assumed to be two-way infinite, with squares indexed by the set of
integers $\integer$.
The computation begins with the tape head scanning the tape square indexed by
0, with the input string written in the squares indexed by $1,\ldots,n$ and all
other tape squares containing the blank symbol, and with the starting
state~$1$.
The evolution of the Turing machine is specified by a transition function
\begin{equation}
  \delta: Q \times \Gamma \rightarrow Q \times \Gamma \times \{-1,+1\};
\end{equation}
if the machine is in the state $p\in Q$, the tape head is scanning a square
that holds the symbol $a\in\Gamma$, and it is the case that
\begin{equation}
  \delta(p,a) = (q, b, D),
\end{equation}
then in one step the machine will change state to $q$, overwrite the currently
scanned tape square with the symbol $b$, and move the tape head in the
direction $D$ (where $D = -1$ indicates a movement to the left and $D = +1$
indicates a movement to the right).

Suppose that $t$ steps of the Turing machine's computation are to be simulated,
and assume that the length of the input string satisfies $n \leq t$.
Note that it is not possible for the tape head to leave the region of the tape
indexed by the set $\{-t,\ldots,t\}$, and no tape square outside of this region
will ever store a non-blank symbol within these $t$ steps.
For each of the $2t+1$ tape squares indexed by $\{-t,\ldots,t\}$, one
imagines two registers: one register stores information that indicates whether
or not the tape head is currently scanning this tape square, and if it is, the
current state of the Turing machine, while the other register stores the
symbol that is currently written in the corresponding tape square.
More precisely, we define registers
\begin{equation}
  \label{eq:registers-S-and-T}
  \reg{S}_{-t},\ldots,\reg{S}_{t}
  \quad\text{and}\quad
  \reg{T}_{-t},\ldots,\reg{T}_{t},
\end{equation}
where each register $\reg{S}_i$ holds an element of the set $\{0,\ldots,m\}$
and each register $\reg{T}_i$ holds an element of the set $\{0,\ldots,k-1\}$.
If $\reg{S}_i$ holds $0$, then the tape head is not positioned over the square
indexed by $i$, while if $\reg{S}_i$ holds $p \in \{1,\ldots,m\}$, then the
tape square is positioned over the tape square indexed by $i$ and the current
state of the Turing machine is $p$.
In either case, the symbol written in the tape square indexed by $i$ is stored
in $\reg{T}_i$.
The simulation begins with a pre-processing step that initializes these
registers appropriately for a given input string.

To simulate one step of the Turing machine's computation, each of the
registers \eqref{eq:registers-S-and-T} is simultaneously updated.
The contents of the pair $(\reg{S}_i,\reg{T}_i)$ after being updated depend
only on the contents of the registers $(\reg{S}_{i-1},\reg{T}_{i-1})$,
$(\reg{S}_{i},\reg{T}_{i})$, and $(\reg{S}_{i+1},\reg{T}_{i+1})$ prior to the
update, as is suggested by Figure~\ref{fig:local-classical-transitions}.
This local dependence is enabled by the fact that the Turing machine's state is
stored locally in the register $\reg{S}_j$ that corresponds to the tape head
location $j$, along with the fact that the tape head cannot move more than one
square on each computation step.
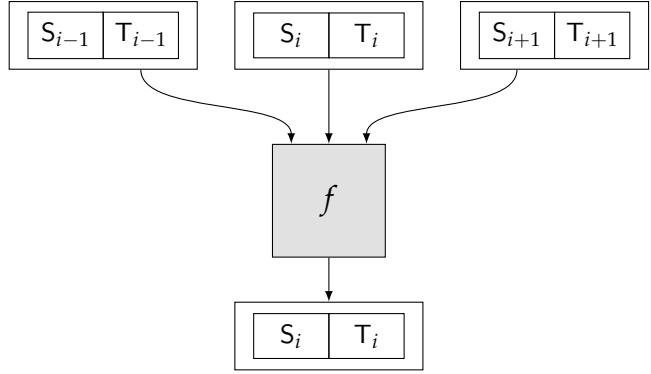
\begin{figure}[!t]
  \begin{center}
    \vspace*{4mm}
    \begin{tikzpicture}[>=latex]
      \tikzstyle{block}=[draw, minimum width=25mm, minimum height=9mm]
      \tikzstyle{circuit}=[draw, minimum width=15mm, minimum height=15mm,
        fill=black!12]
      \tikzstyle{iblock}=[minimum height=9mm]
      \tikzstyle{icircuit}=[minimum height=15mm]
      \tikzstyle{register}=[draw, minimum width=10mm, minimum height=4mm,
        font=\fontsize{10}{0}\selectfont]
      
      \node[block] (A1) at (-4.5,6) {};
      \node[block] (A2) at (-1.5,6) {};
      \node[block] (A3) at (1.5,6) {};
      \node[block] (B2) at (-1.5,2) {};

      \node[circuit,yshift=-22mm] (cA2) at (A2) {$f$};
      
      \node[register,xshift=-5mm] at (A1) {$\reg{S}_{i-1}$};
      \node[register,xshift=5mm] at (A1) {$\reg{T}_{i-1}$};
      
      \node[register,xshift=-5mm] at (A2) {$\reg{S}_i$};
      \node[register,xshift=5mm] at (A2) {$\reg{T}_i$};
      
      \node[register,xshift=-5mm] at (A3) {$\reg{S}_{i+1}$};
      \node[register,xshift=5mm] at (A3) {$\reg{T}_{i+1}$};
      
      \node[register,xshift=-5mm] at (B2) {$\reg{S}_i$};
      \node[register,xshift=5mm] at (B2) {$\reg{T}_i$};
      
      \draw[->] ([xshift=5mm]A1.south)
      .. controls +(down:6mm) and +(up:6mm) ..
      ([xshift=-5mm]cA2.north);
      
      \draw[->] ([xshift=0mm]A2.south) 
      .. controls +(down:6mm) and +(up:6mm) ..
      ([xshift=0]cA2.north);
      
      \draw[->] ([xshift=-5mm]A3.south) 
      .. controls +(down:6mm) and +(up:6mm) ..
      ([xshift=5mm]cA2.north);
      
      \draw[->] (cA2.south) -- (B2.north);
      
    \end{tikzpicture}
  \end{center}
  \caption{The information corresponding to cell $i$ in the Turing machine is a
    function of the information in the previous time step for cells
    $\{i-1, i, i+1\}$.
    The shaded box labeled $f$ represents the function that determines the
    updated contents of $(\reg{S}_i,\reg{T}_i)$ given the contents of
    $(\reg{S}_{i-1},\reg{T}_{i-1})$, $(\reg{S}_i,\reg{T}_i)$, and
    $(\reg{S}_{i+1},\reg{T}_{i+1})$ prior to being updated.}
  \label{fig:local-classical-transitions}
\end{figure}
For example, if $(\reg{S}_i,\reg{T}_i)$ stores $(0,c)$ for some
$c\in\{0,\ldots,k-1\}$, then after being updated, $\reg{T}_i$ will continue
to store $c$, as the absence of the tape head at square $i$ prevents the
tape contents of square $i$ from changing.
However, after being updated, $\reg{S}_i$ might potentially contain any element
of $\{0,\ldots,m\}$; for example, if $(\reg{S}_{i+1},\reg{T}_{i+1})$ stores
$(p,a)$ and $\delta(p,a) = (q, b, -1)$, then the updated
contents of $\reg{S}_i$ will become $q$, indicating that the tape head has
moved over square $i$ and that the current state has become~$q$.
In the same situation the updated contents of $(\reg{S}_{i+1},\reg{T}_{i+1})$
will become $(0,b)$.

One could of course give an explicit description of this update rule, for a
given transition function $\delta$, but doing so for the sake of this
discussion is not particularly helpful.
Instead, it suffices to observe that there exists a function
\begin{equation}
  f: \bigl( \{0,\ldots,m\}\times\{0,\ldots,k-1\}\bigr)^3
  \rightarrow \{0,\ldots,m\}\times\{0,\ldots,k-1\},
\end{equation}
also depicted in Figure~\ref{fig:local-classical-transitions}, that describes
the update rule.
Figure~\ref{fig:circuit-one-step} illustrates the update being applied
simultaneously to every pair of registers.

It should be noted that for inputs of the form
\begin{equation}
  ((q_1,a_1),(q_2,a_2),(q_3,a_3))
\end{equation}
where two or more of the elements $q_1,q_2,q_3$ are contained in the set
$\{1,\ldots,m\}$, the output of $f$ may be defined arbitrarily; such an input
could only arise from a Turing machine configuration having two or more tape
heads, which never happens in a valid Turing machine computation.
Also note that a special case must be made for the register pairs in the
leftmost column (i.e., $(\reg{S}_{-t},\reg{T}_{-t})$) and the rightmost column
(i.e., $(\reg{S}_{t},\reg{T}_{t})$) as there are no register pairs to the left
or right, respectively, to feed into the function $f$ that determines how these
registers update.
However, the missing inputs will always be $(0,0)$, representing the absence of
the tape head and a blank symbol stored on the tape, and so they can be
``hard-coded'' into the corresponding copies of the function $f$.

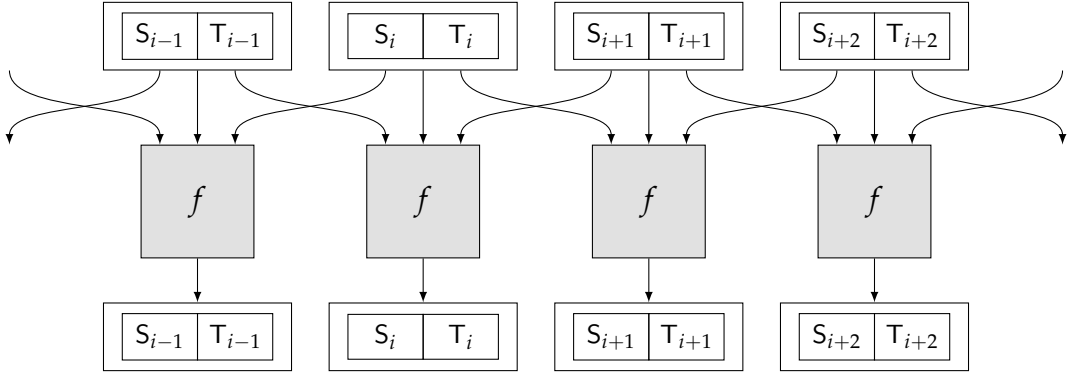
\begin{figure}[!t]
  \begin{center}
    \begin{tikzpicture}[>=latex]
      \tikzstyle{block}=[draw, minimum width=25mm, minimum height=9mm]
      \tikzstyle{circuit}=[draw, minimum width=15mm, minimum height=15mm,
        fill=black!12]
      \tikzstyle{iblock}=[minimum height=9mm]
      \tikzstyle{icircuit}=[minimum height=15mm]
      \tikzstyle{register}=[draw, minimum width=10mm, minimum height=4mm,
        font=\fontsize{10}{0}\selectfont]
      
      \node[iblock] (A0) at (-7.5,6) {};
      \node[block] (A1) at (-4.5,6) {};
      \node[block] (A2) at (-1.5,6) {};
      \node[block] (A3) at (1.5,6) {};
      \node[block] (A4) at (4.5,6) {};
      \node[iblock] (A5) at (7.5,6) {};
      
      \node[iblock] (B0) at (-7.5,2) {};
      \node[block] (B1) at (-4.5,2) {};
      \node[block] (B2) at (-1.5,2) {};
      \node[block] (B3) at (1.5,2) {};
      \node[block] (B4) at (4.5,2) {};
      \node[iblock] (B5) at (7.5,2) {};
      
      \node[icircuit,yshift=-22mm] (cA0) at (A0) {};
      \node[circuit,yshift=-22mm] (cA1) at (A1) {$f$};
      \node[circuit,yshift=-22mm] (cA2) at (A2) {$f$};
      \node[circuit,yshift=-22mm] (cA3) at (A3) {$f$};
      \node[circuit,yshift=-22mm] (cA4) at (A4) {$f$};
      \node[icircuit,yshift=-22mm] (cA5) at (A5) {};
      
      \node[register,xshift=-5mm] at (A1) {$\reg{S}_{i-1}$};
      \node[register,xshift=5mm] at (A1) {$\reg{T}_{i-1}$};
      
      \node[register,xshift=-5mm] at (A2) {$\reg{S}_i$};
      \node[register,xshift=5mm] at (A2) {$\reg{T}_i$};
      
      \node[register,xshift=-5mm] at (A3) {$\reg{S}_{i+1}$};
      \node[register,xshift=5mm] at (A3) {$\reg{T}_{i+1}$};
      
      \node[register,xshift=-5mm] at (A4) {$\reg{S}_{i+2}$};
      \node[register,xshift=5mm] at (A4) {$\reg{T}_{i+2}$};
      
      \node[register,xshift=-5mm] at (B1) {$\reg{S}_{i-1}$};
      \node[register,xshift=5mm] at (B1) {$\reg{T}_{i-1}$};
      
      \node[register,xshift=-5mm] at (B2) {$\reg{S}_i$};
      \node[register,xshift=5mm] at (B2) {$\reg{T}_i$};
      
      \node[register,xshift=-5mm] at (B3) {$\reg{S}_{i+1}$};
      \node[register,xshift=5mm] at (B3) {$\reg{T}_{i+1}$};
      
      \node[register,xshift=-5mm] at (B4) {$\reg{S}_{i+2}$};
      \node[register,xshift=5mm] at (B4) {$\reg{T}_{i+2}$};
      
      \draw[->] ([xshift=-5mm]A1.south)
      .. controls +(down:6mm) and +(up:6mm) ..
      ([xshift=5mm]cA0.north);
      
      \draw[->] ([xshift=5mm]A4.south)
      .. controls +(down:6mm) and +(up:6mm) ..
      ([xshift=-5mm]cA5.north);
      
      \draw[->] ([xshift=5mm]A0.south)
      .. controls +(down:6mm) and +(up:6mm) ..
      ([xshift=-5mm]cA1.north);
      
      \draw[->] ([xshift=0mm]A1.south) 
      .. controls +(down:6mm) and +(up:6mm) ..
      ([xshift=0]cA1.north);
      
      \draw[->] ([xshift=-5mm]A2.south) 
      .. controls +(down:6mm) and +(up:6mm) ..
      ([xshift=5mm]cA1.north);
      
      \draw[->] (cA1.south) -- (B1.north);
      
      \draw[->] ([xshift=5mm]A1.south)
      .. controls +(down:6mm) and +(up:6mm) ..
      ([xshift=-5mm]cA2.north);
      
      \draw[->] ([xshift=0mm]A2.south) 
      .. controls +(down:6mm) and +(up:6mm) ..
      ([xshift=0]cA2.north);
      
      \draw[->] ([xshift=-5mm]A3.south) 
      .. controls +(down:6mm) and +(up:6mm) ..
      ([xshift=5mm]cA2.north);
      
      \draw[->] (cA2.south) -- (B2.north);
      
      \draw[->] ([xshift=5mm]A2.south)
      .. controls +(down:6mm) and +(up:6mm) ..
      ([xshift=-5mm]cA3.north);
      
      \draw[->] ([xshift=0mm]A3.south) 
      .. controls +(down:6mm) and +(up:6mm) ..
      ([xshift=0]cA3.north);
      
      \draw[->] ([xshift=-5mm]A4.south) 
      .. controls +(down:6mm) and +(up:6mm) ..
      ([xshift=5mm]cA3.north);
      
      \draw[->] (cA3.south) -- (B3.north);
      
      \draw[->] ([xshift=5mm]A3.south)
      .. controls +(down:6mm) and +(up:6mm) ..
      ([xshift=-5mm]cA4.north);
      
      \draw[->] ([xshift=0mm]A4.south) 
      .. controls +(down:6mm) and +(up:6mm) ..
      ([xshift=0]cA4.north);
      
      \draw[->] ([xshift=-5mm]A5.south) 
      .. controls +(down:6mm) and +(up:6mm) ..
      ([xshift=5mm]cA4.north);
     
      \draw[->] (cA4.south) -- (B4.north);
      
    \end{tikzpicture}
  \end{center}
  \caption{To simulate one step of a Turing machine computation, each pair of
    registers is simultaneously updated.
    The same function $f$ describes the update for each pair of registers.}
  \label{fig:circuit-one-step}
\end{figure}

To simulate $t$ steps of the Turing machine computation, one envisions
a network consisting of $t+1$ rows, each having the $2t+1$ register pairs
$(\reg{S}_{-t},\reg{T}_{-t}),\ldots, (\reg{S}_{t},\reg{T}_{t})$, as suggested
by Figure~\ref{fig:circuit-pattern}.
The registers in the top row (which corresponds to time $0$) are initialized
by the pre-processing step suggested previously, so that collectively they
describe the initial configuration of the Turing machine on the input string of
interest, and their updates are performed in the indicated pattern.
Subsequent rows of register pairs will then collectively describe the
configuration of the Turing machine on subsequent computation steps, and in
particular the bottom row will describe the configuration after $t$ steps.
A final post-processing step may be appended so that a description of the final
Turing machine configuration is produced that conforms to some alternative
encoding scheme, if that is desired.

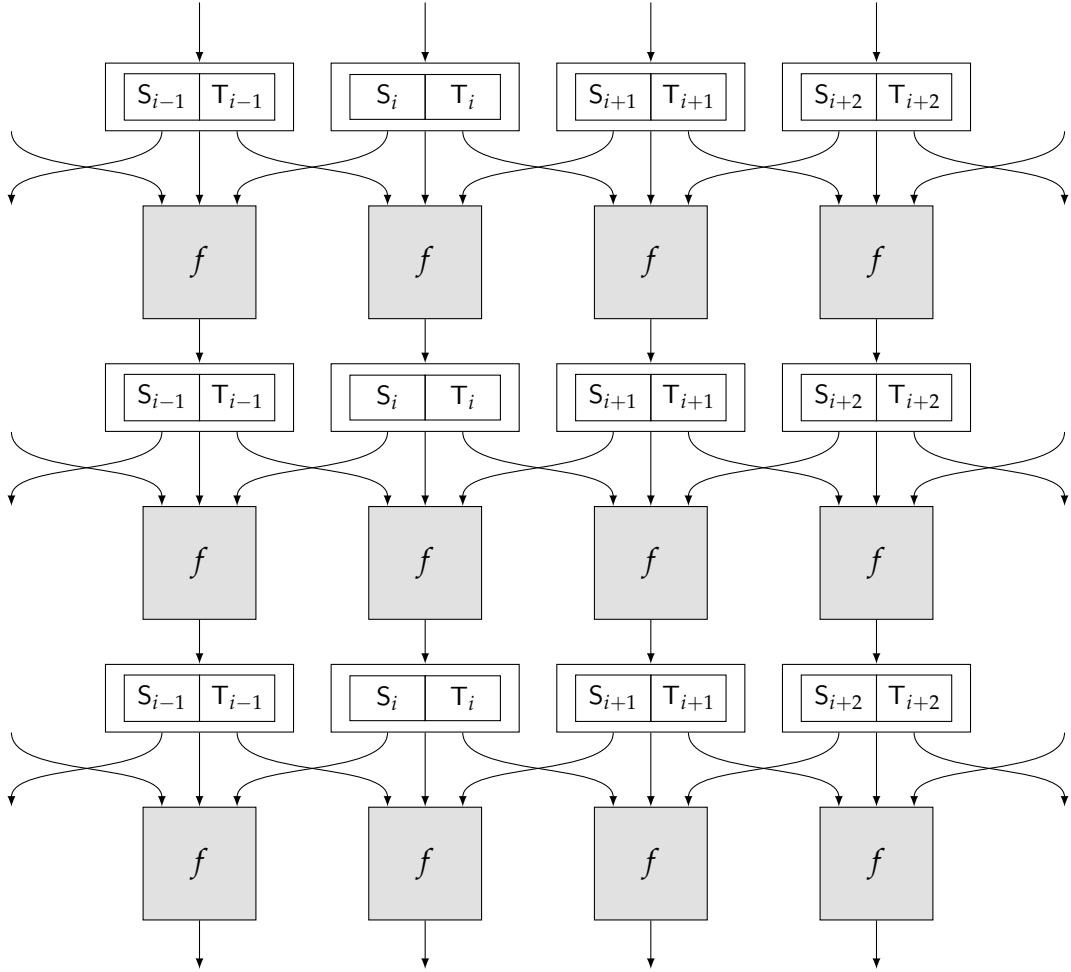
\begin{figure}[!t]
  \begin{center}
    \begin{tikzpicture}[>=latex]
      \tikzstyle{block}=[draw, minimum width=25mm, minimum height=9mm]
      \tikzstyle{circuit}=[draw, minimum width=15mm, minimum height=15mm,
        fill=black!12]
      \tikzstyle{iblock}=[minimum height=9mm]
      \tikzstyle{icircuit}=[minimum height=15mm]
      \tikzstyle{register}=[draw, minimum width=10mm, minimum height=4mm,
        font=\fontsize{10}{0}\selectfont]
      
      \node[iblock] (A0) at (-7.5,6) {};
      \node[block] (A1) at (-4.5,6) {};
      \node[block] (A2) at (-1.5,6) {};
      \node[block] (A3) at (1.5,6) {};
      \node[block] (A4) at (4.5,6) {};
      \node[iblock] (A5) at (7.5,6) {};
      
      \node[iblock] (B0) at (-7.5,2) {};
      \node[block] (B1) at (-4.5,2) {};
      \node[block] (B2) at (-1.5,2) {};
      \node[block] (B3) at (1.5,2) {};
      \node[block] (B4) at (4.5,2) {};
      \node[iblock] (B5) at (7.5,2) {};
      
      \node[iblock] (C0) at (-7.5,-2) {};
      \node[block] (C1) at (-4.5,-2) {};
      \node[block] (C2) at (-1.5,-2) {};
      \node[block] (C3) at (1.5,-2) {};
      \node[block] (C4) at (4.5,-2) {};
      \node[iblock] (C5) at (7.5,-2) {};
      
      \node[minimum height=1mm] (D1) at (-4.5,-5.75) {};
      \node[minimum height=1mm] (D2) at (-1.5,-5.75) {};
      \node[minimum height=1mm] (D3) at (1.5,-5.75) {};
      \node[minimum height=1mm] (D4) at (4.5,-5.75) {};
      
      \node[icircuit,yshift=-22mm] (cA0) at (A0) {};
      \node[circuit,yshift=-22mm] (cA1) at (A1) {$f$};
      \node[circuit,yshift=-22mm] (cA2) at (A2) {$f$};
      \node[circuit,yshift=-22mm] (cA3) at (A3) {$f$};
      \node[circuit,yshift=-22mm] (cA4) at (A4) {$f$};
      \node[icircuit,yshift=-22mm] (cA5) at (A5) {};
      
      \node[icircuit,yshift=-22mm] (cB0) at (B0) {};
      \node[circuit,yshift=-22mm] (cB1) at (B1) {$f$};
      \node[circuit,yshift=-22mm] (cB2) at (B2) {$f$};
      \node[circuit,yshift=-22mm] (cB3) at (B3) {$f$};
      \node[circuit,yshift=-22mm] (cB4) at (B4) {$f$};
      \node[icircuit,yshift=-22mm] (cB5) at (B5) {};
      
      \node[icircuit,yshift=-22mm] (cC0) at (C0) {};
      \node[circuit,yshift=-22mm] (cC1) at (C1) {$f$};
      \node[circuit,yshift=-22mm] (cC2) at (C2) {$f$};
      \node[circuit,yshift=-22mm] (cC3) at (C3) {$f$};
      \node[circuit,yshift=-22mm] (cC4) at (C4) {$f$};
      \node[icircuit,yshift=-22mm] (cC5) at (C5) {};
      
      \node[register,xshift=-5mm] at (A1) {$\reg{S}_{i-1}$};
      \node[register,xshift=5mm] at (A1) {$\reg{T}_{i-1}$};
      
      \node[register,xshift=-5mm] at (A2) {$\reg{S}_i$};
      \node[register,xshift=5mm] at (A2) {$\reg{T}_i$};
      
      \node[register,xshift=-5mm] at (A3) {$\reg{S}_{i+1}$};
      \node[register,xshift=5mm] at (A3) {$\reg{T}_{i+1}$};
      
      \node[register,xshift=-5mm] at (A4) {$\reg{S}_{i+2}$};
      \node[register,xshift=5mm] at (A4) {$\reg{T}_{i+2}$};
      
      \node[register,xshift=-5mm] at (B1) {$\reg{S}_{i-1}$};
      \node[register,xshift=5mm] at (B1) {$\reg{T}_{i-1}$};
      
      \node[register,xshift=-5mm] at (B2) {$\reg{S}_i$};
      \node[register,xshift=5mm] at (B2) {$\reg{T}_i$};
      
      \node[register,xshift=-5mm] at (B3) {$\reg{S}_{i+1}$};
      \node[register,xshift=5mm] at (B3) {$\reg{T}_{i+1}$};
      
      \node[register,xshift=-5mm] at (B4) {$\reg{S}_{i+2}$};
      \node[register,xshift=5mm] at (B4) {$\reg{T}_{i+2}$};
      
      \node[register,xshift=-5mm] at (C1) {$\reg{S}_{i-1}$};
      \node[register,xshift=5mm] at (C1) {$\reg{T}_{i-1}$};
      
      \node[register,xshift=-5mm] at (C2) {$\reg{S}_i$};
      \node[register,xshift=5mm] at (C2) {$\reg{T}_i$};
      
      \node[register,xshift=-5mm] at (C3) {$\reg{S}_{i+1}$};
      \node[register,xshift=5mm] at (C3) {$\reg{T}_{i+1}$};
      
      \node[register,xshift=-5mm] at (C4) {$\reg{S}_{i+2}$};
      \node[register,xshift=5mm] at (C4) {$\reg{T}_{i+2}$};
      
      \draw[->] ([yshift=8mm]A1.north) -- (A1.north);
      \draw[->] ([yshift=8mm]A2.north) -- (A2.north);
      \draw[->] ([yshift=8mm]A3.north) -- (A3.north);
      \draw[->] ([yshift=8mm]A4.north) -- (A4.north);
      
      \draw[->] ([xshift=-5mm]A1.south)
      .. controls +(down:6mm) and +(up:6mm) ..
      ([xshift=5mm]cA0.north);
      
      \draw[->] ([xshift=-5mm]B1.south)
      .. controls +(down:6mm) and +(up:6mm) ..
      ([xshift=5mm]cB0.north);
      
      \draw[->] ([xshift=-5mm]C1.south)
      .. controls +(down:6mm) and +(up:6mm) ..
      ([xshift=5mm]cC0.north);
      
      \draw[->] ([xshift=5mm]A4.south)
      .. controls +(down:6mm) and +(up:6mm) ..
      ([xshift=-5mm]cA5.north);
      
      \draw[->] ([xshift=5mm]B4.south)
      .. controls +(down:6mm) and +(up:6mm) ..
      ([xshift=-5mm]cB5.north);
      
      \draw[->] ([xshift=5mm]C4.south)
      .. controls +(down:6mm) and +(up:6mm) ..
      ([xshift=-5mm]cC5.north);
      
      \draw[->] ([xshift=5mm]A0.south)
      .. controls +(down:6mm) and +(up:6mm) ..
      ([xshift=-5mm]cA1.north);
      
      \draw[->] ([xshift=0mm]A1.south) 
      .. controls +(down:6mm) and +(up:6mm) ..
      ([xshift=0]cA1.north);
      
      \draw[->] ([xshift=-5mm]A2.south) 
      .. controls +(down:6mm) and +(up:6mm) ..
      ([xshift=5mm]cA1.north);
      
      \draw[->] (cA1.south) -- (B1.north);
      
      \draw[->] ([xshift=5mm]B0.south)
      .. controls +(down:6mm) and +(up:6mm) ..
      ([xshift=-5mm]cB1.north);
      
      \draw[->] ([xshift=0mm]B1.south) 
      .. controls +(down:6mm) and +(up:6mm) ..
      ([xshift=0]cB1.north);
      
      \draw[->] ([xshift=-5mm]B2.south) 
      .. controls +(down:6mm) and +(up:6mm) ..
      ([xshift=5mm]cB1.north);
      
      \draw[->] (cB1.south) -- (C1.north);
      
      \draw[->] ([xshift=5mm]C0.south)
      .. controls +(down:6mm) and +(up:6mm) ..
      ([xshift=-5mm]cC1.north);
      
      \draw[->] ([xshift=0mm]C1.south) 
      .. controls +(down:6mm) and +(up:6mm) ..
      ([xshift=0]cC1.north);
      
      \draw[->] ([xshift=-5mm]C2.south) 
      .. controls +(down:6mm) and +(up:6mm) ..
      ([xshift=5mm]cC1.north);
      
      \draw[->] (cC1.south) -- (D1.north);
      
      \draw[->] ([xshift=5mm]A1.south)
      .. controls +(down:6mm) and +(up:6mm) ..
      ([xshift=-5mm]cA2.north);
      
      \draw[->] ([xshift=0mm]A2.south) 
      .. controls +(down:6mm) and +(up:6mm) ..
      ([xshift=0]cA2.north);
      
      \draw[->] ([xshift=-5mm]A3.south) 
      .. controls +(down:6mm) and +(up:6mm) ..
      ([xshift=5mm]cA2.north);
      
      \draw[->] (cA2.south) -- (B2.north);
      
      \draw[->] ([xshift=5mm]B1.south)
      .. controls +(down:6mm) and +(up:6mm) ..
      ([xshift=-5mm]cB2.north);
      
      \draw[->] ([xshift=0mm]B2.south) 
      .. controls +(down:6mm) and +(up:6mm) ..
      ([xshift=0]cB2.north);
      
      \draw[->] ([xshift=-5mm]B3.south) 
      .. controls +(down:6mm) and +(up:6mm) ..
      ([xshift=5mm]cB2.north);
      
      \draw[->] (cB2.south) -- (C2.north);
      
      \draw[->] ([xshift=5mm]C1.south)
      .. controls +(down:6mm) and +(up:6mm) ..
      ([xshift=-5mm]cC2.north);
      
      \draw[->] ([xshift=0mm]C2.south) 
      .. controls +(down:6mm) and +(up:6mm) ..
      ([xshift=0]cC2.north);
      
      \draw[->] ([xshift=-5mm]C3.south) 
      .. controls +(down:6mm) and +(up:6mm) ..
      ([xshift=5mm]cC2.north);
      
      \draw[->] (cC2.south) -- (D2.north);
      
      \draw[->] ([xshift=5mm]A2.south)
      .. controls +(down:6mm) and +(up:6mm) ..
      ([xshift=-5mm]cA3.north);
      
      \draw[->] ([xshift=0mm]A3.south) 
      .. controls +(down:6mm) and +(up:6mm) ..
      ([xshift=0]cA3.north);
      
      \draw[->] ([xshift=-5mm]A4.south) 
      .. controls +(down:6mm) and +(up:6mm) ..
      ([xshift=5mm]cA3.north);
      
      \draw[->] (cA3.south) -- (B3.north);
      
      \draw[->] ([xshift=5mm]B2.south)
      .. controls +(down:6mm) and +(up:6mm) ..
      ([xshift=-5mm]cB3.north);
      
      \draw[->] ([xshift=0mm]B3.south) 
      .. controls +(down:6mm) and +(up:6mm) ..
      ([xshift=0]cB3.north);
      
      \draw[->] ([xshift=-5mm]B4.south) 
      .. controls +(down:6mm) and +(up:6mm) ..
      ([xshift=5mm]cB3.north);
      
      \draw[->] (cB3.south) -- (C3.north);
      
      \draw[->] ([xshift=5mm]C2.south)
      .. controls +(down:6mm) and +(up:6mm) ..
      ([xshift=-5mm]cC3.north);
      
      \draw[->] ([xshift=0mm]C3.south) 
      .. controls +(down:6mm) and +(up:6mm) ..
      ([xshift=0]cC3.north);
      
      \draw[->] ([xshift=-5mm]C4.south) 
      .. controls +(down:6mm) and +(up:6mm) ..
      ([xshift=5mm]cC3.north);
      
      \draw[->] (cC3.south) -- (D3.north);
      
      \draw[->] ([xshift=5mm]A3.south)
      .. controls +(down:6mm) and +(up:6mm) ..
      ([xshift=-5mm]cA4.north);
      
      \draw[->] ([xshift=0mm]A4.south) 
      .. controls +(down:6mm) and +(up:6mm) ..
      ([xshift=0]cA4.north);
      
      \draw[->] ([xshift=-5mm]A5.south) 
      .. controls +(down:6mm) and +(up:6mm) ..
      ([xshift=5mm]cA4.north);
      
      \draw[->] (cA4.south) -- (B4.north);
      
      \draw[->] ([xshift=5mm]B3.south)
      .. controls +(down:6mm) and +(up:6mm) ..
      ([xshift=-5mm]cB4.north);
      
      \draw[->] ([xshift=0mm]B4.south) 
      .. controls +(down:6mm) and +(up:6mm) ..
      ([xshift=0]cB4.north);
      
      \draw[->] ([xshift=-5mm]B5.south) 
      .. controls +(down:6mm) and +(up:6mm) ..
      ([xshift=5mm]cB4.north);
      
      \draw[->] (cB4.south) -- (C4.north);
      
      \draw[->] ([xshift=5mm]C3.south)
      .. controls +(down:6mm) and +(up:6mm) ..
      ([xshift=-5mm]cC4.north);
      
      \draw[->] ([xshift=0mm]C4.south) 
      .. controls +(down:6mm) and +(up:6mm) ..
      ([xshift=0]cC4.north);
      
      \draw[->] ([xshift=-5mm]C5.south) 
      .. controls +(down:6mm) and +(up:6mm) ..
      ([xshift=5mm]cC4.north);
      
      \draw[->] (cC4.south) -- (D4.north);
      
    \end{tikzpicture}
  \end{center}
  \caption{The pattern of connections among the registers corresponding to
    individual tape squares, along with copies of the function $f$ that
    performs the local updates on these registers.}
  \label{fig:circuit-pattern}
\end{figure}

Assuming that the states of the $2t+1$ register pairs
$(\reg{S}_{-t},\reg{T}_{-t}),\ldots, (\reg{S}_{t},\reg{T}_{t})$ 
are encoded as binary strings, that the computation represented by the function
$f$ is implemented as a (constant-size) Boolean circuit, and that the
pre-processing and post-processing steps suggested above have size
at most $O(t^2)$ and depth at most $O(t)$, which is ample size and depth to
handle a wide range of input and output encoding schemes, one obtains a Boolean
circuit simulation of the original Turing machine having linear depth and
quadratic size.
The circuits also conform to a simple and regular pattern, and can therefore
be uniformly generated in logarithmic-space (and hence
polynomial-time) by a deterministic Turing machine.

Now, it is not at all clear how this classic deterministic Turing machine
simulation can be extended to a quantum circuit simulation of quantum Turing
machines.
A first guess for how this might be done is to somehow replace the function
$f$ by a unitary operation that describes the evolution of local parts of the
quantum Turing machine---but the function $f$ does not even have input and
output sets of the same cardinality.
One might therefore hope to replace the function $f$ with a unitary operation
that transforms each triple of register pairs
$(\reg{S}_{i-1},\reg{T}_{i-1})$, $(\reg{S}_i,\reg{T}_i)$,
$(\reg{S}_{i+1},\reg{T}_{i+1})$ in a way that is consistent with one step in
the quantum Turing machine's evolution.
Two complications arise: one is that it is unclear how ``overlapping'' unitary
transformations are to be performed in a consistent way, given that these
operations will generally not commute, and another is that some quantum Turing
machine evolutions fail to be unitary when restricted to a finite portion of
the machine's tape.
(For example, even the trivial Turing machine evolution in which the tape head
moves right on each step without changing state or modifying the tape contents
induces a non-invertible transformation when restricted to any finite region of
the tape.)
Yao's simulation method does indeed overcome these obstacles, but requires
additional ideas in order to do this.

\section{Quantum Turing machines}
\label{sec:QTM}

In this section we describe the quantum Turing machine model, which was
first introduced by Deutsch \cite{Deutsch85} and later studied in depth
by Bernstein and Vazirani \cite{BernsteinV93,BernsteinV97}.
Quantum Turing machines generalize Turing machines to the quantum realm
by allowing them to transition in superposition, subject to constraints that
ensure that the overall evolution of the Turing machine is unitary.

Similar conventions will be followed for quantum Turing machines as for
deterministic Turing machines, as discussed in the previous section.
In particular, we assume for simplicity that quantum Turing machines have
a state set of the form $Q = \{1,\ldots,m\}$ and a tape alphabet of the form
$\Gamma = \{0,\ldots,k-1\}$, for some choice of positive integers $m$ and~$k$,
and where the tape symbol 0 represents the blank symbol.
Again the tape is assumed to be two-way infinite, with squares indexed by the
set of integers $\integer$, and computations begin with the state~$1$, with the tape
head scanning the tape square indexed by 0, and with the input string written in
the squares indexed $1,\ldots,n$ and all other tape squares containing
the blank symbol.

The transition function of a quantum Turing machine takes the form
\begin{equation}
  \label{eq:transition-function-form}
  \delta: Q\times\Gamma \rightarrow \complex^{Q\times\Gamma\times\{-1,+1\}},
\end{equation}
meaning that $\delta(p,a)$ is a complex vector indexed by the set
$Q\times\Gamma\times\{-1,+1\}$ for every $p\in Q$ and $a\in\Gamma$.
The interpretation of a transition function $\delta$ of the form
\eqref{eq:transition-function-form} is as follows:
for each choice of states $p,q\in Q$, tape symbols
$a,b\in\Gamma$, and a direction $D\in\{-1,+1\}$, the complex number
$\delta(p,a)[q,b,D]$ represents the \emph{amplitude} with which a quantum
Turing machine whose current state is $p$ and whose tape head is scanning the
symbol $a$ will change state to $q$, write $b$ to the tape, and move
its tape head in direction $D$.
Not all transition functions of this form describe valid quantum Turing
machines; only those transition functions that induce global unitary
evolutions, to be discussed shortly, are valid.
We note that this definition can easily be relaxed so that the tape head of a
quantum Turing machine is not required to move left or right on each step, but
instead can remain stationary, by allowing the transition function to take the
form
\begin{equation}
  \delta: Q\times\Gamma \rightarrow \complex^{Q\times\Gamma\times\{-1,0,+1\}},
\end{equation}
but in the interest of simplicity we will focus on transition functions of the
form \eqref{eq:transition-function-form} in the discussion that follows.

In order to obtain a computational model that does not permit difficult or
impossible to compute information to be somehow hidden inside of a given
transition function, it is important that the complex numbers
$\delta(p,a)[q,b,D]$ are drawn from a reasonable set, such as a finite set
like
\begin{equation}
  \Bigl\{0,\pm 1,\pm i, \pm \frac{1}{\sqrt{2}}, \pm \frac{i}{\sqrt{2}}\Bigr\}
\end{equation}
or a set for which rational approximations can be efficiently computed.
Adleman, DeMarrais, and Huang \cite{AdlemanDH97} discuss the importance of
such assumptions.
For the sake of this paper, however, we will mostly ignore this issue: the
simulation to be analyzed places no restrictions on the complex numbers
appearing in~$\delta$, but any computationally offensive properties possessed
by~$\delta$ will be inherited by the quantum circuits that result from the
simulation.

In order to specify the global evolution of a quantum Turing machine that a
given transition function $\delta$ induces, we must clarify the notion of a
\emph{configuration} of a Turing machine, which is a classical description
of the machine's state, tape head location, and tape contents.
The state and tape head location of a Turing machine correspond to elements of
the sets $Q$ and $\integer$, respectively, while the contents of a Turing
machine tape can be described by a function $T:\integer\rightarrow\Gamma$,
which specifies that the tape symbol $T(j)$ is stored in the tape square
indexed by $j$, for each integer $j$.
The \emph{support} of such a function is defined as
\begin{equation}
  \op{supp}(T) = \{j\in\integer\,:\,T(j) \not= 0\},
\end{equation}
which is the set of tape square indices that do not contain the blank symbol 0.
We are only concerned with those functions $T$ that have finite support, given
that we only consider computations that begin with a finite-length input string
written on an otherwise blank tape, and therefore the set of all configurations
of a Turing machine forms a countably infinite set.

For a given function $T:\integer\rightarrow\Gamma$, an index $i\in\integer$,
and a tape symbol $a\in\Gamma$, let us write $T_{i,a}$ to denote the function
defined as
\begin{equation}
  T_{i,a}(j) = \begin{cases}
    a & \text{if $j = i$}\\
    T(j) & \text{if $j \not= i$}.
  \end{cases}
\end{equation}
Thus, if the contents of a Turing machine tape are described by $T$, and then
the symbol $a$ overwrites the contents of the square indexed by $i$, then
the resulting tape contents are described by $T_{i,a}$.

For a fixed choice of $Q$ and $\Gamma$, let $\H$ denote the Hilbert space of
complex vectors indexed by the set of configurations of a Turing machine with
state set $Q$ and tape alphabet~$\Gamma$.
That is, $\H$ is the Hilbert space whose standard basis includes the vectors
$\ket{p,i,T}$, where $p\in Q$, $i\in\integer$, and
$T:\integer\rightarrow\Gamma$ has finite support.
The global evolution of a quantum Turing machine whose transition function is
$\delta$ can now be specified by the operator $U_{\delta}$ on $\H$ defined by
the action
\begin{equation}
  \label{eq:QTM-evolution}
  U_{\delta} \ket{p,i,T} = \sum_{q, a, D} \delta(p,T(i))[q,a,D]\,
  \ket{q,i+D,T_{i,a}}
\end{equation}
on standard basis states, and extended to all of $\H$ by linearity.

Bernstein and Vazirani \cite{BernsteinV93} identified conditions
on the transition function $\delta$ that cause the operator $U_{\delta}$ to
be unitary.
To be more precise, they identified conditions under which $U_{\delta}$ is an
isometry, and proved that $U_{\delta}$ is necessarily unitary whenever it is an
isometry.
Although the specific conditions they identify are not relevant to this paper,
it is a simple matter to recall them:
\begin{enumerate}
\item[1.]
  The set of vectors $\bigl\{\delta(p,a)\,:\,p\in Q,\; a\in\Gamma\bigr\}$
  is orthonormal.
      
\item[2.]
  For all triples $(p_0,a_0,b_0),(p_1,a_1,b_1)\in Q\times\Gamma\times\Gamma$,
  one has
  \begin{equation}
    \label{eq:QTMcondition2}
    \sum_{q\in Q} \delta(p_0,a_0)[q,b_0,+1]\,
    \overline{\delta(p_1,a_1)[q,b_1,-1]} = 0.
  \end{equation} 
\end{enumerate}
One may note, in particular, that these conditions are easily checked for a
given transition function $\delta$, as they express a finite number of
orthonormality relations.

Classical Turing machine definitions usually specify that some states are to be
considered as \emph{halting states}, with the understanding being that a Turing
machine continues to compute so long as it has not entered a halting state, and
then stops once a halting state is reached.
Stopping conditions for quantum Turing machines are more subtle.
Deutsch \cite{Deutsch85} suggested that periodic measurements could determine
when a quantum Turing machine computation is to be terminated, whereas
Bernstein and Vazirani \cite{BernsteinV93,BernsteinV97} considered quantum
Turing machine computations that run for a predetermined number of steps.
We will adopt the second convention, which is particularly well-suited to the
simulation of quantum Turing machines by quantum circuits: we simply consider
that the number of steps $t$ of a quantum Turing machine to be simulated by a
quantum circuit is fixed and hard-coded into the circuit.

\subsection*{Variants of quantum Turing machines}
\label{subsec:QTM-variants}

As is the case for classical Turing machines, one may consider variants of
quantum Turing machines, such as quantum Turing machines with multiple tapes,
with tapes having a fixed dimension larger than one, with tape heads that have
greater freedom in their movements, and so on.
The quantum Turing machine definition suggested above can be extended to handle
such variants in a natural way.
For example, a quantum Turing machine with 3 tapes could be described by a
transition function of the form
\begin{equation}
  \delta: Q \times \Gamma^3 \rightarrow
  \complex^{Q\times (\Gamma \times \{-1,+1\})^3},
\end{equation}
with $\delta(p,a_1,a_2,a_3)[q,b_1,D_1,b_2,D_2,b_3,D_3]$
indicating the amplitude with which a quantum Turing machine in state $p$ and
reading the symbols $a_1$, $a_2$, and $a_3$ on its tapes will transition to
state $q$, write the symbols $b_1$, $b_2$, and $b_3$ on its tapes, and move the
tape heads in directions $D_1$, $D_2$, and $D_3$.

Unfortunately, it becomes increasingly difficult to check that a given
transition function induces a unitary global evolution when the Turing machine
variant becomes more complex.
For example, Ozawa and Nishamura~\cite{OzawaN00} identified relatively
simple conditions guaranteeing unitary evolutions for quantum Turing machine
transition functions allowing for stationary tape head movements, and rather
complex conditions for two-tape quantum Turing machines.
We do not investigate the difficulty of checking whether a transition function
of a given quantum Turing machine variant induces a unitary evolution,
but simply assume that a transition function must induce a unitary global
evolution in order for it to be considered valid.
If one attempts to apply the simulation method we describe to a quantum Turing
machine whose global evolution is not unitary, it will result in a non-unitary
circuit.

For the most part, we will not focus too much on the technical aspects of any
of the possible variants of quantum Turing machines.
It is the case, however, that the simulation method we describe extends easily
to some interesting variants of the quantum Turing machine model.
This point will be revisited later in the paper after the simulation and its
analysis have been presented.

\subsection*{Quantum Turing machines with looped tapes}
\label{subsec:truncationsAndLoops}

For a given quantum Turing machine $M$ having state set $Q$ and tape alphabet
$\Gamma$, one has that the Hilbert space $\H$ with respect to which the global
quantum states of $M$ are defined is infinite-dimensional.
On the other hand, any finite-length computation of $M$ will only involve a
finite-dimensional subspace of this Hilbert space.
More concretely, if $M$ runs for $t \geq n$ steps on an input of length $n$,
then its tape head will never leave the portion of the tape indexed by elements
of the set $\{-t,\ldots,t\}$, and all tape squares outside of this region will
contain blank symbols for the duration of the computation.

For this reason, it is tempting to imagine the tape has been truncated in such
a case, so that every relevant classical configuration of $M$ takes the form
$(p,i,T)$ for $i\in\{-t,\ldots,t\}$ and
$\op{supp}(T) \subseteq \{-t,\ldots,t\}$. 
One may imagine that any quantum state of $M$ reached at any point during
such a computation is represented by a unit vector in the finite-dimensional
Hilbert space whose standard basis includes precisely those elements
$\ket{p,i,T}$ for which $i\in\{-t,\ldots,t\}$ and
$\op{supp}(T) \subseteq \{-t,\ldots,t\}$. 
A problem arises, however, which is that this space is generally not invariant
under the action of the evolution operator $U_{\delta}$; and defining a
matrix from this operator by discarding rows and columns corresponding to tape
heads or non-blank tape symbols outside of the region indexed by
$\{-t,\ldots,t\}$ may result in a non-unitary (and possibly non-normal)
matrix.

A simple way to address this issue is to imagine that the tape has been
formed into a loop rather than truncated.
Specifically, for every positive integer $N$, one may consider a Turing machine
tape loop whose squares are indexed by the set
$\integer_N = \{0,1,\ldots,N-1\}$, and where tape head movements are calculated
modulo $N$.
Specifically, we will define a finite-dimensional Hilbert space $\H_N$ whose
standard basis contains all elements of the form $\ket{p,i,T}$ where $p\in Q$,
$i\in\integer_N$, and $T$ takes the form
\begin{equation}
  T:\integer_N \rightarrow\Gamma.
\end{equation}
(No assumption on the finiteness of the support of $T$ is required in this
case, of course, as $\integer_N$ is finite.)
The transition function $\delta$ now defines an operator
\begin{equation}
  U_{\delta,N} \in \Lin(\H_N)
\end{equation}
for every choice of a positive integer $N$ by precisely the same formula 
\eqref{eq:QTM-evolution} as before, except that the expression $i+D$ is
understood to refer to addition modulo $N$.
We observe that if $U_{\delta}$ is unitary, then $U_{\delta, N}$ is necessarily
unitary for every choice of $N \geq 5$.

A simulation of a quantum Turing machine on an input string of length $n$ for
$t\geq n$ steps can immediately be obtained from a simulation of the same
machine on a tape loop of size $N = 2t + 1$ (or any choice of $N$ larger
than $2t+1$).
For this reason we will focus on quantum Turing machines with tapes formed
into loops, whose size will be a function of the input length and number of
steps for which the machine is to be simulated.
Viewing quantum Turing machine computations as taking place on a tape loop is
just a minor convenience that allows us to work entirely with
finite-dimensional Hilbert spaces and removes the need for special cases at the
edges of the tape region indexed by $-t,\ldots,t$.

\section{Localizing causal unitary evolutions}
\label{sec:localizability}

This section of the paper is concerned with a relationship between causality
and localizability of unitary operators, first described by Arrighi, Nesme, and
Werner \cite{ArrighiNW11}.
Our presentation of this relationship will differ somewhat from theirs,
however, and we will require a generalization of their findings that is
concerned with operators whose causality holds only on certain subspaces of a
tensor product space.
Although the relationship between causality and localizability to be discussed
is a key to our analysis of the simulation of quantum Turing machines by
quantum circuits, the section itself is independent of quantum Turing
machines.

We will begin with a definition of causality, which includes both the cases in
which causality holds on an entire space (as considered by
Arrighi, Nesme, and Werner) or just on a subspace.

\begin{definition}
  \label{def:causality}
  Let $\reg{X}$, $\reg{Y}$, and $\reg{Z}$ be registers having associated
  Hilbert spaces $\X$, $\Y$, and $\Z$, respectively, and let
  $U\in\Unitary(\X\otimes\Y\otimes\Z)$ be a unitary operator.
  \begin{enumerate}
  \item[1.]
    The operator $U$ is $\reg{Y}\rightarrow\reg{X}$ causal if, for every pair
    of states $\rho,\sigma\in\Density(\X\otimes\Y\otimes\Z)$ satisfying
    $\tr_{\Z}(\rho) = \tr_{\Z}(\sigma)$, one has
    \begin{equation}
      \tr_{\Y\otimes\Z}(U \rho U^{\ast})
      = \tr_{\Y\otimes\Z}(U \sigma U^{\ast}).
    \end{equation}
  \item[2.]
    The operator $U$ is $\reg{Y}\rightarrow\reg{X}$ causal on a subspace
    $\V\subseteq\X\otimes\Y\otimes\Z$ if, for every pair of states
    $\rho,\sigma\in\Density(\X\otimes\Y\otimes\Z)$ satisfying
    $\im(\rho)\subseteq\V$, $\im(\sigma)\subseteq\V$, and
    $\tr_{\Z}(\rho) = \tr_{\Z}(\sigma)$, one has
    \begin{equation}
      \tr_{\Y\otimes\Z}(U \rho U^{\ast})
      = \tr_{\Y\otimes\Z}(U \sigma U^{\ast}).
    \end{equation}
  \end{enumerate}
\end{definition}

The intuition behind this definition is that a unitary operator $U$ is
$\reg{Y}\rightarrow\reg{X}$ causal if the state of $\reg{X}$ after the
application of the operator $U$ is completely determined by the state of
$(\reg{X},\reg{Y})$ before the application of $U$.
A natural way to view this situation is that $\reg{X}$ represents
some local region of interest, $\reg{Y}$ represents a neighborhood of $\reg{X}$
(excluding $\reg{X}$ itself), and $\reg{Z}$ represents everything outside of
this neighborhood.
If $U$ is $\reg{Y}\rightarrow\reg{X}$ causal, then changes to $\reg{X}$ induced
by $U$ are effectively caused by the state of $(\reg{X},\reg{Y})$ and are
not influenced by the state of $\reg{Z}$.
The restriction of this property to a subspace $\V$ of $\X\otimes\Y\otimes\Z$
requires only that this property holds for states fully supported on $\V$.

Next we will prove two lemmas that lead naturally to the main result of the
section.
The first lemma establishes a simple but useful technical condition on causal
unitary operators.

\begin{lemma}
  \label{lemma:localizability-1}
  Let $\reg{X}$, $\reg{Y}$, and $\reg{Z}$ be registers having associated
  Hilbert spaces $\X$, $\Y$, and $\Z$, respectively, let
  $\V\subseteq\X\otimes\Y\otimes\Z$ be a subspace, and let
  $U\in\Unitary(\X\otimes\Y\otimes\Z)$ be a $\reg{Y}\rightarrow\reg{X}$ causal
  unitary operator on the subspace $\V$.
  For every Hermitian operator $H\in\Herm(\X\otimes\Y\otimes\Z)$ satisfying
  $\im(H)\subseteq\V$ and $\tr_{\Z}(H) = 0$, and every operator
  $X\in\Lin(\X)$, one has
  \begin{equation}
    \ip{H}{U^{\ast} (X\otimes\I_{\Y}\otimes\I_{\Z})U} = 0.
  \end{equation}
\end{lemma}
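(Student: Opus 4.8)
The plan is to recognize the Hilbert--Schmidt inner product as a trace and to reduce the entire claim to the single assertion that $\tr_{\Y\otimes\Z}(U H U^{\ast}) = 0$. Since $H$ is Hermitian, the quantity in question is
\begin{equation}
  \ip{H}{U^{\ast}(X\otimes\I_\Y\otimes\I_\Z)U}
  = \tr\bigl(H\, U^{\ast}(X\otimes\I_\Y\otimes\I_\Z)U\bigr)
  = \tr\bigl((X\otimes\I_\Y\otimes\I_\Z)\,U H U^{\ast}\bigr),
\end{equation}
where the last equality uses cyclicity of the trace. Applying the standard partial-trace identity $\tr((X\otimes\I_\Y\otimes\I_\Z)A) = \tr(X\,\tr_{\Y\otimes\Z}(A))$ with $A = U H U^{\ast}$, this equals $\tr(X\,\tr_{\Y\otimes\Z}(U H U^{\ast}))$. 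Hence it suffices to prove $\tr_{\Y\otimes\Z}(U H U^{\ast}) = 0$, after which the inner product vanishes for every $X\in\Lin(\X)$.

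To establish $\tr_{\Y\otimes\Z}(U H U^{\ast}) = 0$, the idea is to write $H$ as a difference of (normalized) states supported on $\V$ that have equal reductions on $\reg{Z}$, so that the subspace-causality hypothesis can be invoked. I would decompose $H$ spectrally as $H = H_{+} - H_{-}$, where $H_{+},H_{-}\in\Pos(\X\otimes\Y\otimes\Z)$ have orthogonal images; then $\im(H_{+}),\im(H_{-})\subseteq\im(H)\subseteq\V$. Taking the full trace of $\tr_{\Z}(H) = 0$ gives $\tr(H) = 0$, so $c \defeq \tr(H_{+}) = \tr(H_{-})$. If $c = 0$ then $H = 0$ and there is nothing to prove, so assume $c > 0$ and set $\rho = H_{+}/c$ and $\sigma = H_{-}/c$. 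These are density operators with $\im(\rho),\im(\sigma)\subseteq\V$, and the hypothesis $\tr_{\Z}(H) = 0$ gives $\tr_{\Z}(H_{+}) = \tr_{\Z}(H_{-})$, hence $\tr_{\Z}(\rho) = \tr_{\Z}(\sigma)$.

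At this point the hypotheses of Definition~\ref{def:causality}(2) are met, so $\reg{Y}\rightarrow\reg{X}$ causality on $\V$ yields $\tr_{\Y\otimes\Z}(U\rho U^{\ast}) = \tr_{\Y\otimes\Z}(U\sigma U^{\ast})$. Multiplying through by $c$ and subtracting gives $\tr_{\Y\otimes\Z}(U H U^{\ast}) = \tr_{\Y\otimes\Z}(U H_{+} U^{\ast}) - \tr_{\Y\otimes\Z}(U H_{-} U^{\ast}) = 0$, which completes the argument.

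I do not anticipate a serious obstacle: the one point requiring care is that causality is stated only for density operators, whereas $H$ is a general Hermitian operator that is merely trace-zero after $\reg{Z}$ is traced out. The device that bridges this gap is the spectral splitting into $H_{+}$ and $H_{-}$ together with the observation that $\tr(H_{+}) = \tr(H_{-})$ forces a common normalization, which is precisely what makes $\rho$ and $\sigma$ legitimate states with matching $\reg{Z}$-marginals. The only other small check is the image containment $\im(H_{\pm})\subseteq\V$, which follows because the images of $H_{+}$ and $H_{-}$ are orthogonal subspaces whose direct sum is $\im(H)\subseteq\V$.
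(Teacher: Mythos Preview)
Your proof is correct and essentially identical to the paper's: what you call the spectral splitting $H=H_{+}-H_{-}$ with orthogonal images is precisely the Jordan--Hahn decomposition $H=P-Q$ the paper uses, and your normalized states $\rho=H_{+}/c$, $\sigma=H_{-}/c$ coincide with the paper's $\rho=P/\tr(P)$, $\sigma=Q/\tr(Q)$. The only cosmetic difference is that you first isolate the reduction to $\tr_{\Y\otimes\Z}(UHU^{\ast})=0$ via the partial-trace identity before invoking causality, whereas the paper folds this into the single final line $\ip{H}{U^{\ast}(X\otimes\I_{\Y}\otimes\I_{\Z})U}=\lambda\ip{\tr_{\Y\otimes\Z}(U\rho U^{\ast})-\tr_{\Y\otimes\Z}(U\sigma U^{\ast})}{X}=0$.
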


\begin{proof}
  The statement is trivial when $H=0$, so assume $H$ is nonzero, and let
  \begin{equation}
    \label{eq:Jordan-Hahn-of-H}
    H = P - Q
  \end{equation}
  be the Jordan-Hahn decomposition of $H$, meaning that
  $P,Q\in\Pos(\X\otimes\Y\otimes\Z)$ are the unique positive semidefinite
  operators satisfying \eqref{eq:Jordan-Hahn-of-H} and $PQ = 0$.
  The assumption $\tr_{\Z}(H) = 0$ implies that $\tr_{\Z}(P) = \tr_{\Z}(Q)$,
  and in particular
  \begin{equation}
    \tr(P) = \lambda = \tr(Q)
  \end{equation}
  for some positive real number $\lambda > 0$.
  Moreover, by the assumption $\im(H)\subseteq\V$, one has that
  $\im(P)\subseteq\V$ and $\im(Q)\subseteq\V$.
  Thus, the density operators $\rho = P/\tr(P)$ and $\sigma = Q/\tr(Q)$ satisfy
  \begin{equation}
    H = \lambda (\rho - \sigma) \quad\text{and}\quad
    \tr_{\Z}(\rho) = \tr_{\Z}(\sigma),
  \end{equation}
  as well as $\im(\rho)\subseteq\V$ and $\im(\sigma)\subseteq\V$.
  By the assumption that $U$ is causal on $\V$, it follows that
  \begin{equation}
    \ip{H}{U^{\ast} (X\otimes\I_{\Y}\otimes\I_{\Z})U}
    = \lambda \ip{\tr_{\Y\otimes\Z}(U\rho U^{\ast}) -
      \tr_{\Y\otimes\Z}(U\sigma U^{\ast})}{X} = 0,
  \end{equation}
  as required.
\end{proof}

The second lemma draws an implication from the structure suggested in the
previous lemma that will lead naturally to the notion of localizability,
provided that the subspace in question is suitably aligned with the underlying
tensor product structure of the global space.
(Note that the Hilbert space $\W$ in this lemma plays the role of $\X\otimes\Y$
in the definition of causality.)

\begin{lemma}
  \label{lemma:localizability-2}
  Let $\W$ and $\Z$ be finite-dimensional Hilbert spaces, let
  $\{\Delta_1,\ldots,\Delta_n\}\subseteq\Proj(\W)$ and
  $\{\Lambda_1,\ldots,\Lambda_n\}\subseteq\Proj(\Z)$
  be orthogonal sets of nonzero projection operators, and let
  \begin{equation}
    \label{eq:special-form-of-Pi}
    \Pi = \sum_{k = 1}^n \Delta_k \otimes \Lambda_k.
  \end{equation}
  For every operator $A \in \Lin(\W\otimes\Z)$ such that $\ip{H}{A}=0$ for all
  Hermitian operators $H$ with $\im(H) \subseteq \im(\Pi)$ and $\tr_{\Z}(H)=0$,
  there exists an operator $W\in\Lin(\W)$ such that
  \begin{equation}
    \label{eq:localizable-on-V}
    \Pi A\Pi = \Pi(W\otimes\I_{\Z})\Pi.
  \end{equation}
  If, in addition, $A$ is a unitary operator and $[A,\Pi] = 0$, then there
  exists a unitary operator $W$ that satisfies \eqref{eq:localizable-on-V}.
\end{lemma}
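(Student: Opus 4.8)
The plan is to work entirely within the real inner product space of Hermitian operators on $\W\otimes\Z$, equipped with $\ip{H_1}{H_2} = \tr(H_1 H_2)$, and to recognize the hypothesis on $A$ as an orthogonality condition that pins down $\Pi A\Pi$. First I would observe that only $\Pi A\Pi$ is constrained: since every admissible $H$ satisfies $H = \Pi H\Pi$, one has $\ip{H}{A} = \ip{H}{\Pi A\Pi}$. Writing $A = A_1 + iA_2$ with $A_1,A_2$ Hermitian, and correspondingly $B_j = \Pi A_j\Pi$, the reality of $\tr(H B_j)$ for Hermitian $H$ lets me split the single complex condition into the two real conditions $\ip{H}{B_1}=0$ and $\ip{H}{B_2}=0$, required to hold for every Hermitian $H$ with $\im(H)\subseteq\im(\Pi)$ and $\tr_{\Z}(H)=0$.

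The key step is then a clean piece of linear algebra. Let $\mathcal{K}$ denote the real subspace of Hermitian operators $H$ with $\im(H)\subseteq\im(\Pi)$ and $\tr_{\Z}(H)=0$; this is exactly the kernel of the real-linear map $\tr_{\Z}$ restricted to the space $\Herm_{\Pi}$ of Hermitian operators supported on $\im(\Pi)$. Hence $\mathcal{K}^{\perp}$, taken inside $\Herm_{\Pi}$, is the image of the adjoint of this restricted map. Since the adjoint of the partial trace is $X\mapsto X\otimes\I_{\Z}$ (because $\ip{\tr_{\Z}(Y)}{X} = \ip{Y}{X\otimes\I_{\Z}}$) and the orthogonal projection onto $\Herm_{\Pi}$ is $Y\mapsto\Pi Y\Pi$, the adjoint of the restricted map sends a Hermitian $X\in\Herm(\W)$ to $\Pi(X\otimes\I_{\Z})\Pi$. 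Therefore $B_1,B_2\in\mathcal{K}^{\perp}$ forces $B_j = \Pi(X_j\otimes\I_{\Z})\Pi$ for Hermitian operators $X_1,X_2$, and setting $W = X_1 + iX_2$ yields $\Pi A\Pi = B_1 + iB_2 = \Pi(W\otimes\I_{\Z})\Pi$, proving the first assertion.

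For the unitary refinement, suppose $A$ is unitary with $[A,\Pi]=0$. Then $\im(\Pi)$ is $A$-invariant, $\Pi A\Pi = A\Pi$, and $A$ restricts to a unitary on $\im(\Pi) = \bigoplus_k \bigl(\im(\Delta_k)\otimes\im(\Lambda_k)\bigr)$. Expanding the $W$ produced above and using $\Lambda_j\Lambda_k = \delta_{jk}\Lambda_k$, I compute $\Pi(W\otimes\I_{\Z})\Pi = \sum_k (\Delta_k W\Delta_k)\otimes\Lambda_k$, which is block diagonal with respect to this orthogonal decomposition. Unitarity of $A\Pi$ on $\im(\Pi)$ then forces each block $(\Delta_k W\Delta_k)\otimes\Lambda_k$ to be unitary on $\im(\Delta_k)\otimes\im(\Lambda_k)$, and since $\Lambda_k$ acts as the identity there, each $\Delta_k W\Delta_k$ restricts to a unitary on $\im(\Delta_k)$. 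Finally I assemble a global unitary $\widetilde{W}\in\Unitary(\W)$ that agrees with these blocks on $\bigoplus_k\im(\Delta_k)$ and acts as the identity on the orthogonal complement; because altering $W$ outside the ranges of the $\Delta_k$ leaves $\Pi(W\otimes\I_{\Z})\Pi$ unchanged, $\widetilde{W}$ is unitary and still satisfies \eqref{eq:localizable-on-V}.

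I expect the main obstacle to be the identification of $\mathcal{K}^{\perp}$: recognizing that the hypothesis is precisely orthogonality to the kernel of a partial trace, and that the adjoint of the partial trace is $X\mapsto X\otimes\I_{\Z}$, which is exactly what forces the localized form $\Pi(W\otimes\I_{\Z})\Pi$ to appear. The secondary point requiring care is the passage from block-wise unitarity to a single global unitary $W$, where one must verify that the freedom to modify $W$ off the relevant blocks does not disturb the identity.
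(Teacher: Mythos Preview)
Your argument is correct, and it takes a genuinely different route from the paper's proof. The paper establishes the first assertion by a commutation argument: for each unitary $Z\in\Unitary(\Z)$ commuting with every $\Lambda_k$ it forms the Hermitian test operator $K = \Pi H\Pi - (\I_{\W}\otimes Z^{\ast})\Pi H\Pi(\I_{\W}\otimes Z)$, notes that $K$ is admissible (supported in $\im(\Pi)$ with $\tr_{\Z}(K)=0$), and from $\ip{K}{A}=0$ deduces $[\I_{\W}\otimes Z,\Pi A\Pi]=0$; a Schur-type argument then yields the block form $\Pi A\Pi = \sum_k W_k\otimes\Lambda_k$ with $W_k = \Delta_k W_k\Delta_k$, from which $W$ is assembled. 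Your approach bypasses the commutation step entirely by recognizing the hypothesis as orthogonality to the kernel of the restricted partial trace and invoking $\ker(\phi)^{\perp} = \im(\phi^{\ast})$, so that the adjoint $X\mapsto \Pi(X\otimes\I_{\Z})\Pi$ hands you $W$ directly. This is shorter and, notably, your proof of the first assertion does not use the special tensor form of $\Pi$ at all---it goes through for an arbitrary projection $\Pi$---whereas the paper's commutation route leans on that structure from the outset. For the unitary refinement the two proofs essentially converge: both observe that $\Pi(W\otimes\I_{\Z})\Pi = \sum_k(\Delta_k W\Delta_k)\otimes\Lambda_k$, deduce block-wise unitarity of $\Delta_k W\Delta_k$ on $\im(\Delta_k)$, and patch in the identity on the complement to obtain a global unitary.
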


\begin{proof}
  Suppose $Z\in\Unitary(\Z)$ is any unitary operator satisfying
  \begin{equation}
    \label{eq:commutation-Z-and-Pi}
    [\I_{\X}\otimes Z,\Pi] = 0.
  \end{equation}
  For an arbitrarily chosen Hermitian operator
  $H\in\Herm(\X\otimes\Z)$, one has
  \begin{equation}
    \label{eq:zero-inner-product}
    \ip{H}{\Pi A\Pi - (\I_{\X}\otimes Z)\Pi A \Pi(\I_{\X}\otimes Z^{\ast})}
    = \ip{K}{A}
  \end{equation}
  for
  \begin{equation}
    K = \Pi H \Pi -
    (\I_{\X}\otimes Z^{\ast}) \Pi H \Pi (\I_{\X}\otimes Z).
  \end{equation}
  The operator $K$ is Hermitian and satisfies $\im(K) \subseteq \im(\Pi)$ and
  $\tr_{\Z}(K) = 0$, and therefore by the assumptions of the lemma the quantity
  represented by \eqref{eq:zero-inner-product} is zero.
  Because this is so for every choice of $H\in\Herm(\X\otimes\Z)$, it follows
  that
  \begin{equation}
    \Pi A \Pi = (\I_{\X}\otimes Z) \Pi A \Pi (\I_{\X}\otimes Z^{\ast}),
  \end{equation}
  which is equivalent to
  \begin{equation}
    \label{eq:commutation-relation}
    \bigl[ \I_{\X}\otimes Z, \Pi A \Pi \bigr] = 0.
  \end{equation}

  Now, the set of all unitary operators $Z\in\Unitary(\Z)$ for which
  \eqref{eq:commutation-Z-and-Pi} is satisfied includes those operators for
  which $[Z,\Lambda_k] = 0$ for all $k\in\{1,\ldots,n\}$, from which it follows
  that
  \begin{equation}
    \label{eq:sum-expression-causal}
    \Pi A \Pi = \sum_{k = 1}^n W_k \otimes \Lambda_k
  \end{equation}
  for some choice of $W_1,\ldots,W_n\in\Lin(\X)$ satisfying
  $W_k = \Delta_k W_k \Delta_k$ for each $k\in\{1,\ldots,n\}$.
  By setting
  \begin{equation}
    \label{eq:definition-of-W}
    W = W_1 + \cdots + W_n + \bigl(\I_{\X}-\Delta_1-\cdots-\Delta_n\bigr),
  \end{equation}
  one obtains an operator satisfying
  \begin{equation}
    \label{eq:localityRelation}
    \Pi A \Pi = \sum_{k = 1}^n \Delta_k W \Delta_k \otimes \Lambda_k
    = \Pi(W\otimes\I_{\Z})\Pi.
  \end{equation}
  
  Finally, if $A$ is unitary and $[A,\Pi]=0$, then $A$ is a unitary operator
  when restricted to $\im(\Pi)$, which implies that each $W_k$ in
  \eqref{eq:sum-expression-causal} is unitary when restricted to
  $\im(\Delta_k)$.
  The operator $W$ defined in \eqref{eq:definition-of-W} is therefore unitary,
  which completes the proof.
\end{proof}

\begin{remark}
  Note that if one allows $\{\Delta_1,\ldots,\Delta_n\}$ not to be mutually
  orthogonal, the decomposition for $\Pi A \Pi$ given by
  \eqref{eq:sum-expression-causal} can still be obtained.
  However, now the operator defined in \eqref{eq:definition-of-W} does not
  satisfy the relation in \eqref{eq:localityRelation} because the images of
  the operators $W_1,\ldots,W_n$ may overlap.
\end{remark}

\begin{remark}
Note that our proof for Lemma \ref{lemma:localizability-2} will still go through if the space $\Z$ is allowed to be an infinite-dimensional separable Hilbert space, as can be checked for example through $\cite{conway2000course, takesaki2013theory}$.
\end{remark}

Finally, we state the main theorem of the section, which connects the property
of a unitary operator being causal with the notion of \emph{localizability},
which simply means that an operator can be represented as a tensor product of
one operator with the identity operator.
The original result of Arrighi, Nesme, and Werner that this theorem generalizes
states that if $U$ is a $\reg{Y}\rightarrow\reg{X}$ causal unitary operator and
$X\in\Unitary(\X)$ is a unitary operator on $\X$, then
\begin{equation}
  U^{\ast}(X\otimes\I_{\Y\otimes\Z})U = W\otimes\I_{\Z}
\end{equation}
for some unitary operator $W\in\Unitary(\X\otimes\Y)$.
As has already been suggested, the generalization represented by the theorem
that follows concerns unitary operators that are only causal on some subspace
that is suitably aligned with the tensor product structure of
$\X\otimes\Y\otimes\Z$.

\begin{theorem}
  \label{theorem:localizability}
  Let $\X$, $\Y$, and $\Z$ be complex Euclidean spaces, let
  $\{\Delta_1,\ldots,\Delta_n\}\subseteq\Proj(\X\otimes\Y)$ and
  $\{\Lambda_1,\ldots,\Lambda_n\}\subseteq\Proj(\Z)$ be orthogonal sets of
  nonzero projection operators, let
  \begin{equation}
    \label{eq:special-form-of-Pi2}
    \Pi = \sum_{k = 1}^n \Delta_k \otimes \Lambda_k,
  \end{equation}
  and let $U\in\Unitary(\X\otimes\Y\otimes\Z)$ be a unitary operator that is
  $\reg{Y}\rightarrow\reg{X}$ causal on $\im(\Pi)$.
  For every operator $X\in\Lin(\X)$, there exists an operator
  $W\in\Lin(\X\otimes\Y)$ such that
  \begin{equation}
    \label{eq:localizable-on-V2}
    \Pi U^{\ast}(X\otimes\I_{\Y\otimes\Z})U\Pi = \Pi(W\otimes\I_{\Z})\Pi.
  \end{equation}
  If, in addition, $X$ is unitary and
  $[U^{\ast}(X\otimes\I_{\Y\otimes\Z})U,\Pi] = 0$, then $W$ may also be taken
  to be unitary.
\end{theorem}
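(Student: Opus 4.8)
The plan is to obtain the theorem as a direct composition of Lemma~\ref{lemma:localizability-1} and Lemma~\ref{lemma:localizability-2}, which have been arranged precisely so that the conclusion of the first supplies the hypothesis of the second. The key observation is that, in the notation of Lemma~\ref{lemma:localizability-2}, the space $\W$ should be identified with $\X\otimes\Y$, so that $\W\otimes\Z$ is canonically the same as $\X\otimes\Y\otimes\Z$. Under this identification the orthogonal sets $\{\Delta_1,\ldots,\Delta_n\}\subseteq\Proj(\X\otimes\Y)$ and $\{\Lambda_1,\ldots,\Lambda_n\}\subseteq\Proj(\Z)$, together with $\Pi$ as defined in \eqref{eq:special-form-of-Pi2}, are exactly the data required by Lemma~\ref{lemma:localizability-2} (its $\Pi$ from \eqref{eq:special-form-of-Pi} agreeing with ours), and both $\X\otimes\Y$ and $\Z$ are finite-dimensional as required.

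First I would set $A = U^{\ast}(X\otimes\I_{\Y\otimes\Z})U$, which lies in $\Lin(\X\otimes\Y\otimes\Z)=\Lin(\W\otimes\Z)$, and note that $X\otimes\I_{\Y\otimes\Z}$ is literally the operator $X\otimes\I_{\Y}\otimes\I_{\Z}$ that appears in Lemma~\ref{lemma:localizability-1}. Applying Lemma~\ref{lemma:localizability-1} with the subspace $\V = \im(\Pi)$, which is permissible because $U$ is assumed to be $\reg{Y}\rightarrow\reg{X}$ causal on $\im(\Pi)$, shows that $\ip{H}{A}=0$ for every Hermitian operator $H$ satisfying $\im(H)\subseteq\im(\Pi)$ and $\tr_{\Z}(H)=0$. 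This is exactly the condition that Lemma~\ref{lemma:localizability-2} demands of the operator $A$.

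Next I would invoke Lemma~\ref{lemma:localizability-2} to produce an operator $W\in\Lin(\W)=\Lin(\X\otimes\Y)$ with $\Pi A\Pi = \Pi(W\otimes\I_{\Z})\Pi$, which upon substituting $A = U^{\ast}(X\otimes\I_{\Y\otimes\Z})U$ is precisely \eqref{eq:localizable-on-V2}. For the unitary refinement, observe that if $X$ is unitary then $A$ is a product of unitaries and hence itself unitary; under the additional hypothesis $[U^{\ast}(X\otimes\I_{\Y\otimes\Z})U,\Pi]=\lbrack A,\Pi\rbrack=0$, the final assertion of Lemma~\ref{lemma:localizability-2} yields a unitary $W$ satisfying the same relation.

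Since both lemmas are already established, there is no analytic obstacle remaining; the theorem is essentially their clean composition, with all of the genuine work front-loaded into the two lemmas. The only point that requires care, and the thing I would double-check, is the bookkeeping of the identification $\W=\X\otimes\Y$ together with the verification that $X\otimes\I_{\Y\otimes\Z}$ is exactly the operator for which Lemma~\ref{lemma:localizability-1} guarantees the vanishing inner product, so that the hypotheses of Lemma~\ref{lemma:localizability-2} are met on the nose.
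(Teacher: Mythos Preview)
Your proposal is correct and matches the paper's proof essentially verbatim: the paper also sets $\W = \X\otimes\Y$ and $A = U^{\ast}(X\otimes\I_{\Y\otimes\Z})U$, invokes Lemma~\ref{lemma:localizability-1} to verify the hypothesis of Lemma~\ref{lemma:localizability-2}, and then reads off both conclusions from the latter. There is nothing to add.
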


\begin{proof}
  Let $\W = \X\otimes\Y$ and $A = U^{\ast}(X\otimes\I_{\Y\otimes\Z})U$.
  By Lemma~\ref{lemma:localizability-1} the operator $A$ satisfies the
  requirements of Lemma~\ref{lemma:localizability-2}, which in turn implies the
  theorem.
\end{proof}

\section{Circuit simulation of quantum Turing machines}
\label{sec:simulation}

In this section we present a simulation of quantum Turing machines by quantum
circuits based on the simulation method of Yao, together with its analysis.
As has already been mentioned, there are technical differences between the
simulation we present and the one originally proposed by Yao, and these
differences are described briefly later in the section.

\subsection*{Simulation structure}

Following the same conventions that were described in Section~\ref{sec:QTM},
we will assume that the quantum Turing machine $M$ to be simulated has state
set $Q = \{1,\ldots,m\}$ and tape alphabet $\Gamma = \{0,\ldots,k-1\}$, and has
a transition function
\begin{equation}
  \delta:Q\times\Gamma\rightarrow\complex^{Q\times\Gamma\times\{-1,+1\}}
\end{equation}
that induces a unitary global evolution.
It will also be assumed that some input alphabet
$\Sigma\subseteq\{1,\ldots,k-1\}$ has been specified, and that the computation
of the quantum Turing machine $M$ on an input string $x\in\Sigma^n$ is to be
simulated for $t\geq n$ steps.
The simulation can be performed for fewer than $n$ steps, but the assumption
that $t\geq n$ allows us to write $t$ rather than $\max\{n,t\}$ in various
places throughout the proof, and little generality is lost in disregarding
Turing machine computations that are not even long enough to read their entire
input string.

Quantum circuits operate on qubits, of course, but it is convenient to first
describe a circuit simulation of quantum Turing machines that operates on
collections of registers whose classical state sets relate to the sets $\Gamma$
and $Q$, rather than on qubits.
(The classical Turing machine simulation in Section~\ref{sec:DTM} was described
in a similar style.)
More precisely, the simulation described below makes use of registers whose
classical state sets are either $\{0,\ldots,k-1\}$ or $\{-m,\ldots,m\}$.
The gates in these circuits will operate on at most six registers, three of
each of the two sizes just mentioned.
In both cases, these are constant-size registers, each such register can
be replaced by a constant number of qubits, and the operations on these
registers that appear in the simulation can be replaced by constant-size quantum circuits
acting on these qubits. In such a replacement, elements of the sets $\{0,\ldots,k-1\}$ or
$\{-m,\ldots,m\}$ would be encoded as binary strings of the appropriate length, and
an arbitrary choice for such encodings may be selected.

The simulation includes pre-processing and post-processing steps that will be
discussed shortly.
The main part of the simulation functions in an iterative manner that resembles
the classical simulation described in Section~\ref{sec:DTM}.
That is, it consists of a concatenation of $t$ identical circuit \emph{layers},
each of which simulates a single step of the Turing machine.
Also similar to the classical case, the simulation will make use of a
collection of registers
\begin{equation}
  \label{eq:register-pairs}
  (\reg{S}_{-t},\reg{T}_{-t}), \ldots, (\reg{S}_{t},\reg{T}_{t})
\end{equation}
to represent those tape squares indexed by integers in the range
$\{-t,\ldots,t\}$; each register $\reg{S}_i$ indicates the presence or absence
of the tape head at the square indexed by $i$, as well as the Turing machine's
state if the head is present at this location, while $\reg{T}_i$ represents the
contents of the tape square indexed by $i$.

We will set $N = 2t+1$, which is assumed to be at least 5, and imagine that the
Turing machine $M$ runs on a tape loop of length $N$ rather than a two-way
infinite tape.
(One may choose $N$ to be larger than $2t+1$ without compromising the
simulation.
It turns out that it is both natural and convenient to choose $N$ to be the
smallest multiple of 3 that is at least $2t+1$, as will become clear later in
the section.)
As was mentioned in Section~\ref{sec:QTM}, there is essentially no difference
between the two cases, as the tape head never has time to cross the division
between the tape squares indexed by $t$ and $-t$ (or, equivalently,
$t$ and $t+1$, as tape square indices are equated modulo $N$).
The transition function $\delta$ induces a unitary operator $U_{\delta,N}$ on
the Hilbert space $\H_N$ whose standard basis corresponds to the set of
possible configurations of $M$ running on a tape loop of length $N$, as defined
in Section~\ref{sec:QTM}.
Hereafter we will write $U$ rather than $U_{\delta,N}$ for brevity, as $\delta$
and $N$ may safely be viewed as being fixed for the purposes of this
description.

Next, let us be more precise about the registers \eqref{eq:register-pairs}.
As suggested above, we will view the indices of these registers as representing
elements of $\integer_N$, so that they may alternatively be written (without
changing their order) as
$(\reg{S}_{N-t},\reg{T}_{N-t}), \ldots, (\reg{S}_{N-1},\reg{T}_{N-1})$,
$(\reg{S}_{0},\reg{T}_{0}),\ldots, (\reg{S}_{t},\reg{T}_{t})$.
The classical state set of each register $\reg{S}_i$ is $\{-m,\ldots,m\}$ and
the classical state set of each $\reg{T}_i$ is $\{0,\ldots,k-1\}$.
Note, in particular, that this choice differs from the classical case, in which
each $\reg{S}_i$ stores an element of $\{0,\ldots,m\}$.
In essence, the states $-1,\ldots,-m$ will function as ``inactive copies'' of
the states $1,\ldots,m$; this is a simple but key trick that allows Yao's
simulation method to work.
For each $i\in\integer_N$ we will let $\S_i$ and $\T_i$ denote the Hilbert
spaces associated with $\reg{S}_i$ and $\reg{T}_i$, respectively, and we will
let
\begin{equation}
  \K_N = (\S_0\otimes\T_0) \otimes \cdots \otimes (\S_{N-1}\otimes\T_{N-1})
\end{equation}
be the combined Hilbert space of the entire sequence
$(\reg{S}_0,\reg{T}_0),\, \ldots, \, (\reg{S}_{N-1},\reg{T}_{N-1})$.
The spaces $\S_0,\ldots,\S_{N-1}$ are of course equivalent to one another, as
are the spaces $\T_0,\ldots,\T_{N-1}$, and when we wish to refer generally to
any one of these spaces without specifying which one it is, we will simply
write $\S$ or $\T$ without a subscript.

For each configuration $(p,i,T)$ of $M$ running on a tape loop of length $N$,
one may associate a classical state
\begin{equation}
  f(p,i,T) \in \bigl( \{0,\ldots,m\}\times\{0,\ldots,k-1\} \bigr)^N
\end{equation}
of the register pairs
$(\reg{S}_0,\reg{T}_0),\, \ldots, \, (\reg{S}_{N-1},\reg{T}_{N-1})$ in a
similar way to what is done in the classical simulation described in
Section~\ref{sec:DTM}.
That is, each register $\reg{T}_j$ stores $T(j)$ for $j \in \integer_N$, the
register $\reg{S}_i$ stores $p$, which is an element of the set
$\{1,\ldots,m\}$, and every other register $\reg{S}_j$, for $j\not=i$, stores
0.
None of the registers $\reg{S}_0,\ldots,\reg{S}_{N-1}$ stores a negative
value.
One can also define an isometry $A\in\Unitary(\H_N,\K_N)$ based on this correspondence
between configurations and classical register states as
\begin{equation}
  A = \sum_{(p,i,T)} \ket{f(p,i,T)} \bra{p,i,T},
\end{equation}
where the sum is over all configurations of $M$ on a tape loop of length $N$.
It may be observed that the projection $A A^{\ast}$ is alternatively
described as the projection onto the space spanned by classical states of the
registers $(\reg{S}_0,\reg{T}_0),\, \ldots, \, (\reg{S}_{N-1},\reg{T}_{N-1})$
that correspond to valid Turing machine configurations, meaning that none of
the registers $\reg{S}_0,\ldots,\reg{S}_{N-1}$ contain negative values and
exactly one of these registers contains a positive value.

\subsection*{Simulation procedure}

The pre-processing step of the simulation initializes
$(\reg{S}_0,\reg{T}_0)$, \ldots, $(\reg{S}_{N-1},\reg{T}_{N-1})$ to the
standard basis state corresponding to the initial configuration of $M$ on input
$x$.
Each circuit layer in the main part of the simulation will induce a
unitary transformation that agrees with $A U A^{\ast}$ on the subspace $\im(A)$.
The final post-processing step transforms the state of the registers
$(\reg{S}_0,\reg{T}_0)$, \ldots, $(\reg{S}_{N-1},\reg{T}_{N-1})$ into whatever
output form is desired for the simulation.
The state of these registers will be fully supported on $\im(A)$, so a
standard basis measurement of these registers after the completion of the main
part of the simulation would necessarily yield a state corresponding to a valid
configuration of $M$.

The main challenge of the simulation is to efficiently implement the layers in
the main part of the simulation, each of which performs a transformation that
agrees with $A U A^{\ast}$ on $\im(A)$.
To this end, define a reversible (and therefore unitary) transformation
\begin{equation}
  F \ket{p} = \ket{-p}
\end{equation}
for every $p\in\{-m,\ldots,m\}$, which may be regarded as an operator acting on
$\S$, or equivalently on $\S_i$ for any choice of $i\in\integer_N$.
Note that $F$ is not a phase flip, it is a permutation of the standard basis
states $\ket{-m},\ldots,\ket{m}$.
For each index $i\in\integer_N$, define $F_i$ to be the unitary
operator acting on $\K_N$ that is obtained by tensoring $F$ on the register
$\reg{S}_i$ with the identity operator on all of the remaining registers.
Notice that the operators $F_0,\ldots,F_{N-1}$ mutually commute, and that their
product $F_0\cdots F_{N-1}$ is equivalent to $F$ being performed independently
on every one of the registers $\reg{S}_0,\ldots,\reg{S}_{N-1}$.

Next, define two unitary operators:
\begin{equation}
  V = A U A^{\ast} + (\I - A A^{\ast})
  \quad\text{and}\quad
  W = (F_0\cdots F_{N-1}) V^{\ast} (F_0\cdots F_{N-1}) V.
\end{equation}
The subspace $\im(A)$ is evidently an invariant subspace of $V$, as is its
orthogonal complement $\im(A)^{\perp}$, upon which $V$ acts trivially.
The operator $F_0\cdots F_{N-1}$ maps $\im(A)$ into $\im(A)^{\perp}$, as each
standard basis state corresponding to a configuration is transformed into a
standard basis state for which one of the registers
$\reg{S}_0,\ldots,\reg{S}_{N-1}$ contains a negative value.
Specifically, $\ket{f(p,i,T)}$ is transformed into a standard basis state in
which the register $\reg{S}_i$ contains the value $-p$.
The operator
\begin{equation}
  (F_0\cdots F_{N-1}) V^{\ast} (F_0\cdots F_{N-1})
\end{equation}
therefore acts trivially on $\im(A)$, implying that $V$ and $W$ act identically
on $\im(A)$.

Now consider the operator $V^{\ast}(F_0\cdots F_{N-1})V$, which may
alternatively be written
\begin{equation}
  V^{\ast}(F_0\cdots F_{N-1})V =
  (V^{\ast}F_0 V) \cdots (V^{\ast}F_{N-1}V).
\end{equation}
As $F_0,\ldots,F_{N-1}$ mutually commute, so do the operators
$V^{\ast}F_0 V, \ldots, V^{\ast}F_{N-1}V$.
It therefore suffices that each circuit layer in the main part of the
simulation applies these operators, in an arbitrary order, followed by the
operator $F_0\cdots F_{N-1}$ (or, equivalently, $F$ applied independently to
each of the registers $\reg{S}_0,\ldots,\reg{S}_{N-1}$).

\subsection*{Locality}

It remains to prove that each of the operators $V^{\ast} F_i V$ can be
localized, specifically to the registers $(\reg{S}_{i-1},\reg{T}_{i-1}),
(\reg{S}_i,\reg{T}_i), (\reg{S}_{i+1},\reg{T}_{i+1})$, when restricted to
a suitable subspace that contains $\im(A)$.
Fix $i\in\integer_N$, define
$\reg{X} = (\reg{S}_i,\reg{T}_i)$ and 
$\reg{Y} = (\reg{S}_{i-1},\reg{T}_{i-1},\reg{S}_{i+1},\reg{T}_{i+1})$,
and let $\reg{Z}$ denote all of the remaining registers among
$(\reg{S}_0,\reg{T}_0)$, \ldots, $(\reg{S}_{N-1},\reg{T}_{N-1})$ that do not
appear in $\reg{X}$ or $\reg{Y}$.
For each $a\in\{0,1\}$, define $\Delta_a\in\Proj(\X\otimes\Y)$ to be the
projection onto the space spanned by standard basis states of
$(\reg{X},\reg{Y})$ in which precisely $a$ of the registers
$\reg{S}_{i-1},\reg{S}_{i},\reg{S}_{i+1}$ contain a nonzero (either positive
or negative) value, and define $\Lambda_a\in\Proj(\Z)$ similarly, but replacing
$\reg{S}_{i-1},\reg{S}_{i},\reg{S}_{i+1}$ with those registers among
$\reg{S}_0,\ldots,\reg{S}_{N-1}$ that appear in $\reg{Z}$ rather than
$(\reg{X},\reg{Y})$.
Finally, define a projection
\begin{equation}
  \label{eq:one-head-projection}
  \Pi = \Delta_0 \otimes \Lambda_1 + \Delta_1\otimes\Lambda_0.
\end{equation}
In words, $\Pi$ is the projection onto the space spanned by standard basis
states of the registers
$(\reg{S}_0,\reg{T}_0),\ldots,(\reg{S}_{N-1},\reg{T}_{N-1})$ in which exactly
one of the registers $\reg{S}_0,\ldots,\reg{S}_{N-1}$ contains a nonzero value
(representing exactly one tape head, either active or inactive).
The expression \eqref{eq:one-head-projection} reveals that this projection
is aligned with the tensor product structure of $\X\otimes\Y\otimes\Z$ in
a suitable way to allow for Theorem~\ref{theorem:localizability} to be applied
to the situation under consideration.

\begin{figure}[!t]
  \begin{center}
    \begin{tikzpicture}[>=latex]
      \tikzstyle{block}=[draw, minimum width=25mm, minimum height=10mm]
      \tikzstyle{circuit}=[draw, minimum width=85mm, minimum height=10mm,
        fill=black!12]
      \tikzstyle{smallcircuit}=[draw, minimum width=25mm, minimum height=10mm,
        fill=black!12]
      \tikzstyle{register}=[draw, minimum width=10mm, minimum height=4mm,
        font=\fontsize{10}{0}\selectfont]
      
      \node[block] (A1) at (-4.5,8) {};
      \node[block] (A2) at (-1.5,8) {};
      \node[block] (A3) at (1.5,8) {};
      \node[block] (A4) at (4.5,8) {};
      \node[block] (A5) at (7.5,8) {};
      
      \node[block] (B1) at (-4.5,0.5) {};
      \node[block] (B2) at (-1.5,0.5) {};
      \node[block] (B3) at (1.5,0.5) {};
      \node[block] (B4) at (4.5,0.5) {};
      \node[block] (B5) at (7.5,0.5) {};
      
      \node[circuit,yshift=-15mm] (c1) at (A2) {$G$};
      \node[circuit,yshift=-30mm] (c2) at (A3) {$G$};
      \node[circuit,yshift=-45mm] (c3) at (A4) {$G$};

      \node[smallcircuit,yshift=-60mm] (f1) at (A1) {$F\otimes\I$};
      \node[smallcircuit,yshift=-60mm] (f2) at (A2) {$F\otimes\I$};
      \node[smallcircuit,yshift=-60mm] (f3) at (A3) {$F\otimes\I$};
      \node[smallcircuit,yshift=-60mm] (f4) at (A4) {$F\otimes\I$};
      \node[smallcircuit,yshift=-60mm] (f5) at (A5) {$F\otimes\I$};

      \draw[->] (f1) -- (B1);
      \draw[->] (f2) -- (B2);
      \draw[->] (f3) -- (B3);
      \draw[->] (f4) -- (B4);
      \draw[->] (f5) -- (B5);
      
      \draw[fill=black!12]
      (3.25,7) -- (9,7) 
      decorate [decoration={random steps, segment length=.25cm}]%
      {-- (9,6)} -- (3.25,6) -- (3.25,7);
      
      \draw[fill=black!12]
      (6.25,5.5) -- (9,5.5) 
      decorate [decoration={random steps, segment length=.25cm}]%
      {-- (9,4.5)} -- (6.25,4.5) -- (6.25,5.5);
      
      \draw[fill=black!12]
      (-3.25,5.5) -- (-6,5.5) 
      decorate [decoration={random steps, segment length=.25cm}]%
      {-- (-6,4.5)} -- (-3.25,4.5) -- (-3.25,5.5);
      
      \draw[fill=black!12]
      (-.25,4) -- (-6,4) 
      decorate [decoration={random steps, segment length=.25cm}]%
      {-- (-6,3)} -- (-.25,3) -- (-.25,4);
      
      \node[yshift=-15mm] at (A5) {$G$};
      \node[yshift=-45mm] at (A1) {$G$};
        
      \node[register,xshift=-5mm] at (A1) {$\reg{S}_{i-2}$};
      \node[register,xshift=5mm] at (A1) {$\reg{T}_{i-2}$};
      
      \node[register,xshift=-5mm] at (A2) {$\reg{S}_{i-1}$};
      \node[register,xshift=5mm] at (A2) {$\reg{T}_{i-1}$};
      
      \node[register,xshift=-5mm] at (A3) {$\reg{S}_{i}$};
      \node[register,xshift=5mm] at (A3) {$\reg{T}_{i}$};
      
      \node[register,xshift=-5mm] at (A4) {$\reg{S}_{i+1}$};
      \node[register,xshift=5mm] at (A4) {$\reg{T}_{i+1}$};
      
      \node[register,xshift=-5mm] at (A5) {$\reg{S}_{i+2}$};
      \node[register,xshift=5mm] at (A5) {$\reg{T}_{i+2}$};
      
      \node[register,xshift=-5mm] at (B1) {$\reg{S}_{i-2}$};
      \node[register,xshift=5mm] at (B1) {$\reg{T}_{i-2}$};
      
      \node[register,xshift=-5mm] at (B2) {$\reg{S}_{i-1}$};
      \node[register,xshift=5mm] at (B2) {$\reg{T}_{i-1}$};
      
      \node[register,xshift=-5mm] at (B3) {$\reg{S}_{i}$};
      \node[register,xshift=5mm] at (B3) {$\reg{T}_{i}$};
      
      \node[register,xshift=-5mm] at (B4) {$\reg{S}_{i+1}$};
      \node[register,xshift=5mm] at (B4) {$\reg{T}_{i+1}$};
      
      \node[register,xshift=-5mm] at (B5) {$\reg{S}_{i+2}$};
      \node[register,xshift=5mm] at (B5) {$\reg{T}_{i+2}$};

      \draw[->] ([yshift=8mm]A1.north) -- (A1.north);
      \draw[->] ([yshift=8mm]A2.north) -- (A2.north);
      \draw[->] ([yshift=8mm]A3.north) -- (A3.north);
      \draw[->] ([yshift=8mm]A4.north) -- (A4.north);
      \draw[->] ([yshift=8mm]A5.north) -- (A5.north);

      \draw[->] ([yshift=-10mm]A1.north) -- ([yshift=-15mm]A1.north);
      \draw[->] ([yshift=-25mm]A1.north) -- ([yshift=-30mm]A1.north);
      \draw[->] ([yshift=-40mm]A1.north) -- ([yshift=-45mm]A1.north);
      \draw[->] ([yshift=-55mm]A1.north) -- ([yshift=-60mm]A1.north);

      \draw[->] ([yshift=-10mm]A2.north) -- ([yshift=-15mm]A2.north);
      \draw[->] ([yshift=-25mm]A2.north) -- ([yshift=-30mm]A2.north);
      \draw[->] ([yshift=-40mm]A2.north) -- ([yshift=-45mm]A2.north);
      \draw[->] ([yshift=-55mm]A2.north) -- ([yshift=-60mm]A2.north);

      \draw[->] ([yshift=-10mm]A3.north) -- ([yshift=-15mm]A3.north);
      \draw[->] ([yshift=-25mm]A3.north) -- ([yshift=-30mm]A3.north);
      \draw[->] ([yshift=-40mm]A3.north) -- ([yshift=-45mm]A3.north);
      \draw[->] ([yshift=-55mm]A3.north) -- ([yshift=-60mm]A3.north);

      \draw[->] ([yshift=-10mm]A4.north) -- ([yshift=-15mm]A4.north);
      \draw[->] ([yshift=-25mm]A4.north) -- ([yshift=-30mm]A4.north);
      \draw[->] ([yshift=-40mm]A4.north) -- ([yshift=-45mm]A4.north);
      \draw[->] ([yshift=-55mm]A4.north) -- ([yshift=-60mm]A4.north);

      \draw[->] ([yshift=-10mm]A5.north) -- ([yshift=-15mm]A5.north);
      \draw[->] ([yshift=-25mm]A5.north) -- ([yshift=-30mm]A5.north);
      \draw[->] ([yshift=-40mm]A5.north) -- ([yshift=-45mm]A5.north);
      \draw[->] ([yshift=-55mm]A5.north) -- ([yshift=-60mm]A5.north);
      
    \end{tikzpicture}
  \end{center}
  \caption{
    Each Turing machine step is simulated by one circuit layer in the main part
    of the simulation, which consists of a pattern of unitary operations as
    illustrated.
    (Note that if the number of register pairs is not divisible by 3, then one
    or two additional $G$ transformations may need to be performed on separate
    levels so that $G$ is applied once to each
    triple $(\reg{S}_{i-1},\reg{T}_{i-1})$, $(\reg{S}_{i},\reg{T}_{i})$,
    $(\reg{S}_{i+1},\reg{T}_{i+1})$.
    Alternatively, $N$ can be increased to the nearest multiple of 3 without
    affecting the validity of the simulation.)
  }
  \label{fig:brick-wall}
\end{figure}
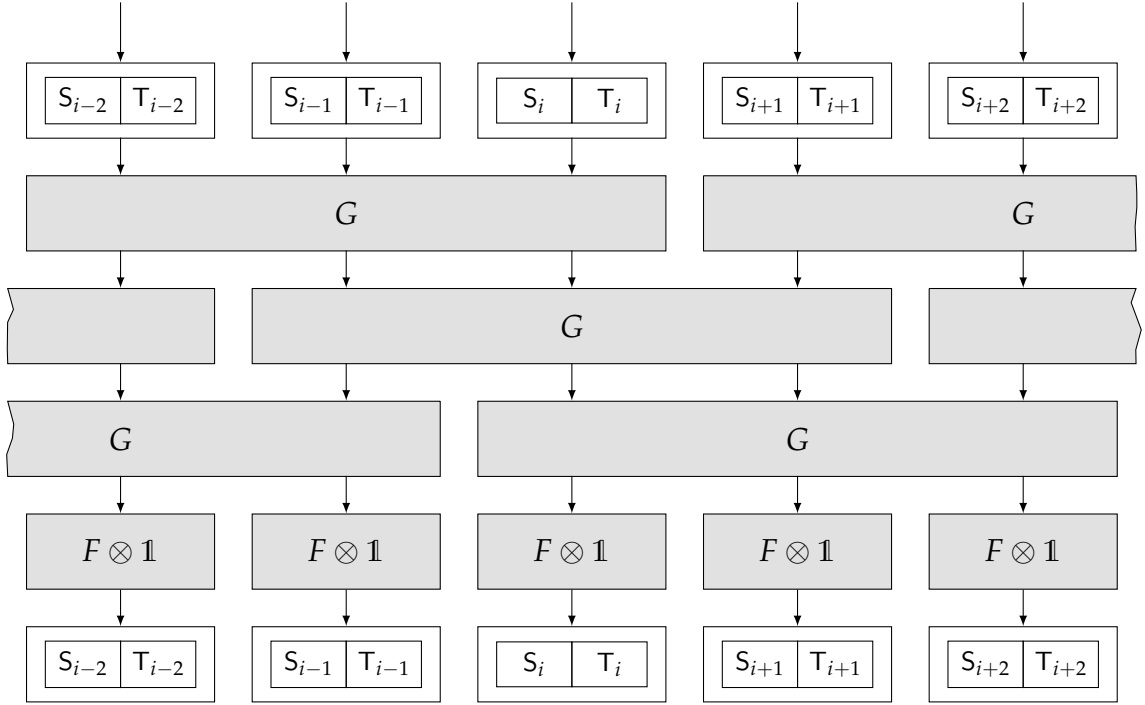

It is evident that the operator $V$ is $\reg{Y}\rightarrow\reg{X}$ causal
on the subspace $\im(\Pi)$: for an arbitrary state $\rho\in\Density(\K_N)$
satisfying $\rho = \Pi\rho\Pi$, the state
$(V \rho V^{\ast})[\reg{S}_i,\reg{T}_i]$
(i.e., the state of $(\reg{S}_i,\reg{T}_i)$ obtained by tracing out all other
registers from $V \rho V^{\ast}$) is uniquely determined by the state
$\rho[\reg{S}_{i-1},\reg{T}_{i-1},\reg{S}_i,
  \reg{T}_i,\reg{S}_{i+1},\reg{T}_{i+1}]$.
By Theorem~\ref{theorem:localizability} there must therefore exist a unitary
operator $G \in \Unitary(\X\otimes\Y)$, or equivalently
$G\in\Unitary((\S\otimes\T)^{\otimes 3})$, such that
\begin{equation}
  \label{eq:equation-defining-G}
  \Pi (G\otimes\I_{\Z}) \Pi = \Pi
  V^{\ast}((F\otimes\I)\otimes\I_{\Y\otimes\Z})V \Pi.
\end{equation}
Note that $G$ has no dependence on $i$ because the behavior of $M$ is the same
for every tape square.
The circuit suggested by Figure~\ref{fig:brick-wall}, in which $G$ is applied
to each consecutive triple $(\reg{S}_{i-1},\reg{T}_{i-1})$,
$(\reg{S}_i,\reg{T}_i)$, $(\reg{S}_{i+1},\reg{T}_{i+1})$, followed by $F$ on
each of the registers $\reg{S}_0,\ldots,\reg{S}_{N-1}$, therefore agrees with
$W$ on $\im(\Pi)$.
Given that $\im(A) \subseteq \im(\Pi)$, this operator also agrees with $W$ on
$\im(A)$.

As an aside, one may observe that the operator $V$ is not
$\reg{Y}\rightarrow\reg{X}$ causal on the entire space $\K_N$, for any
selection of $i\in\integer_N$.
For example, if $\reg{S}_j$ contains nonzero values for two or more distinct
choices of $j\in\integer_N$ (which would represent an invalid ``multi-headed''
configuration), then $V$ will act trivially on all local regions, regardless of
their states.
This explains why we have stated and proved
Theorem~\ref{theorem:localizability}, as opposed to directly making use of the
results of Arrighi, Nesme, and Werner \cite{ArrighiNW11}.

\subsection*{Behavior of the local gate $G$}

The action of the operation $G$ on standard basis states, which can be
recovered directly from the equation \eqref{eq:equation-defining-G}, is as
follows:
\begin{enumerate}
\item[1.]
  For $p_1,p_2,p_3\in\{1,\ldots,m\}$ and $a_1,a_2,a_3\in\Gamma$, $G$ acts
  trivially on the following standard basis states:
  \begin{equation}
    \begin{aligned}
      & \ket{0,a_1}\ket{p_2,a_2}\ket{0,a_3},\\
      & \ket{-p_1,a_1}\ket{0,a_2}\ket{0,a_3},\\
      & \ket{0,a_1}\ket{0,a_2}\ket{-p_3,a_3}.
    \end{aligned}
  \end{equation}
  In the first of these cases $F_i$ acts as the identity and $V$ cancels $V^\ast$, while in the other two cases all three of those operators act as the identity.
  
  Similarly, $G$ may be taken to act trivially on all choices of standard basis
  states of the form
  \begin{equation}
    \ket{q_1,a_1}\ket{q_2,a_2}\ket{q_3,a_3}
  \end{equation}
  for which two or more of the values $q_1$, $q_2$, and $q_3$ are nonzero.
  (The action of $G$ can, in fact, be chosen arbitrarily on the space spanned
  by such states, so long as this space is invariant under the action of $G$.)
  
\item[2.]
  For $p_2\in\{1,\ldots,m\}$ and $a_1,a_2,a_3\in\Gamma$, the action of $G$
  on standard basis states of the form
  $\ket{0,a_1}\ket{-p_2,a_2}\ket{0,a_3}$ is as follows:
  \begin{equation}
    \begin{aligned}
      G: \ket{0,a_1}\ket{-p_2,a_2}\ket{0,a_3}
      \mapsto \hspace{-3cm}\\[2mm]
      & \phantom{+}
      \sum_{\substack{p_1\in\{1,\ldots,m\}\\ b_1\in\Gamma}}
      \overline{\delta(p_1,b_1)[p_2,a_1,+1]}
      \ket{p_1,b_1}\ket{0,a_2}\ket{0,a_3}\\
      & +
      \sum_{\substack{p_3\in\{1,\ldots,m\}\\ b_3\in\Gamma}}
      \overline{\delta(p_3,b_3)[p_2,a_3,-1]}
      \ket{0,a_1}\ket{0,a_2}\ket{p_3,b_3}.
    \end{aligned}
  \end{equation}

\item[3.]
  For $p_1,p_3\in\{1,\ldots,m\}$ and $a_1,a_2,a_3\in\Gamma$, the action of $G$
  on standard basis states of the forms
  $\ket{p_1,a_1}\ket{0,a_2}\ket{0,a_3}$ and
  $\ket{0,a_1}\ket{0,a_2}\ket{p_3,a_3}$ is as follows:
  \begin{equation}
    \begin{aligned}
  	\label{eq:FirstEqThirdCaseForG}
      G: \ket{p_1,a_1}\ket{0,a_2}\ket{0,a_3}
      \mapsto \hspace{-4.5cm}\\[2mm]
      & \phantom{+}
      \sum_{\substack{q_2\in\{1,\ldots,m\}\\ b_1\in\Gamma}}
      \delta(p_1,a_1)[q_2,b_1,+1]
      \ket{0,b_1}\ket{-q_2,a_2}\ket{0,a_3}\\
      & +
      \sum_{\substack{
          q_1\in\{1,\ldots,m\}\\
          r_0\in\{1,\ldots,m\}\\
          b_1\in\Gamma,\; c_1\in\Gamma}}
      \delta(p_1,a_1)[r_0,c_1,-1]
      \overline{\delta(q_1,b_1)[r_0,c_1,-1]}
      \ket{q_1,b_1}\ket{0,a_2}\ket{0,a_3}.
    \end{aligned}
  \end{equation}
  and
  \begin{equation}
    \begin{aligned}
      G: \ket{0,a_1}\ket{0,a_2}\ket{p_3,a_3}
      \mapsto \hspace{-4.5cm}\\[2mm]
      & \phantom{+}
      \sum_{\substack{q_2\in\{1,\ldots,m\}\\ b_3\in\Gamma}}
      \delta(p_3,a_3)[q_2,b_3,-1]
      \ket{0,a_1}\ket{-q_2,a_2}\ket{0,b_3}\\
      & +
      \sum_{\substack{
          q_3\in\{1,\ldots,m\}\\
          r_4\in\{1,\ldots,m\}\\
          b_3\in\Gamma,\; c_3\in\Gamma}}
      \delta(p_3,a_3)[r_4,c_3,+1]
      \overline{\delta(q_3,b_3)[r_4,c_3,+1]}
      \ket{0,a_1}\ket{0,a_2}\ket{q_3,b_3}.
    \end{aligned}
  \end{equation}
\end{enumerate}

For a given quantum Turing machine $M$, the actions expressed in the second and
third items above can be simplified by making use of the conditions of
Bernstein and Vazirani required for $\delta$ to induce a global unitary
evolution.
We have described the transitions without making such simplifications to
illustrate how the transitions can simply be read off from the equation
\eqref{eq:equation-defining-G}.
Note that in the third case, one might expect from
\eqref{eq:equation-defining-G} that there would be terms where the head ends up
outside the considered range of cells.
For example, in \eqref{eq:FirstEqThirdCaseForG} we might expect to have a term
where the head first goes left with the application of $V$ and then goes left
again with the application of $V^\ast$.
However, we know from our application of Theorem \ref{theorem:localizability}
that such terms cannot exist, which means that the corresponding coefficients
that would be derived from $\delta$ must be zero for any choice of quantum
Turing machine.

\subsection*{Recapitulation}

In summary, the simulation of $M$ for $t$ steps on a given input string $x$ of
length $n\leq t$ is as follows:
\begin{enumerate}
\item[1.]
  (Pre-processing step)
  For $N = 2t+1$, initialize registers
  $(\reg{S}_0,\reg{T}_0)$, \ldots, $(\reg{S}_{N-1},\reg{T}_{N-1})$
  so that their state represents the initial configuration of $M$ on input $x$,
  running on a tape loop of length $N$.  
\item[2.]
  (Main part)
  Let $G$ be the unitary operator determined by the transition function
  $\delta$ of $M$ as described above.
  Concatenate $t$ identical copies of a circuit that first applies $G$ to every
  triple of register pairs $(\reg{S}_{i-1},\reg{T}_{i-1})$,
  $(\reg{S}_{i},\reg{T}_{i})$, $(\reg{S}_{i+1},\reg{T}_{i+1})$, for $i$ ranging
  over the set $\integer_N$, and then applies $F$ to each of the registers
  $\reg{S}_0,\ldots,\reg{S}_{N-1}$.
  The copies of $G$ can be applied in an arbitrary order, such as the 
  one suggested in Figure~\ref{fig:brick-wall} that allows for these operations
  to be parallelized.
\item[3.]
  (Post-processing step)
  Transform the standard basis states of the registers 
  $(\reg{S}_0,\reg{T}_0)$, \ldots, $(\reg{S}_{N-1},\reg{T}_{N-1})$
  that represent a configuration of $M$ on a tape loop of length $N$ into
  whatever configuration encoding is desired for the output.
\end{enumerate}

\subsection*{Complexity of the simulation}

As was already suggested, each of the registers
$\reg{S}_0,\ldots,\reg{S}_{N-1}$ and $\reg{T}_0,\ldots,\reg{T}_{N-1}$ may be
viewed as a constant-size collection of qubits, and the standard basis states
of these registers may be encoded as a binary string of an appropriate length,
and therefore the entire simulation described above may be implemented as a
quantum circuit.
Each of the registers has constant size, and can therefore be represented by a
constant number of qubits.

Let us first make the simplifying assumption that the operation $G$, which acts
on a constant number of qubits for any fixed choice of a quantum Turing machine, is available as a single quantum gate.
The total number of gates required by the main part of the simulation is
therefore $O(t^2)$, and the depth required is $O(t)$.
The pre-processing step can be performed by circuits having constant depth and
size linear in $t$, and therefore the pre-processing step and the main part of
the simulation can together be performed by quantum circuits of size $O(t^2)$
and depth $O(t)$.

The cost of the post-processing step depends on the desired form for the output
of the simulation.
For a natural choice of an encoding scheme in which each configuration
$(p,i,T)$ is described as a sequence of integers $p\in\{1,\ldots,m\}$,
$i\in\{-t,\ldots,t\}$, and $T(-t),\ldots,T(t)\in\{0,\ldots,k-1\}$, all
expressed in binary notation, the post-processing step can be performed by a
circuit with size $O(t\log(t))$ and depth $O(\log(t))$.
For a wide range of alternative encoding schemes for Turing machine
configurations, the post-processing step can be performed by circuits whose
size and depth are within the bounds $O(t^2)$ and $O(t)$ obtained for the main
part of the simulation.
For any such output form, the total number of gates required by the simulation
is therefore $O(t^2)$ and the depth is $O(t)$.

The circuits that result from the simulation described above are evidently
logarithmic-space uniformly generated (and therefore polynomial-time uniformly
generated).
To be more precise, for every quantum Turing machine $M$, there exists a
deterministic Turing machine running in logarithmic space that, on
input $1^n 0 1^t$, outputs a description of the quantum circuit that simulates
$M$ on inputs of length $n$ for $t$ steps.
This follows from the observation that these circuits all conform to the same
simple and regular pattern---the dependence on $M$ is captured entirely by the
specific choice for $G$ and the size (always considered a constant) of the
state set and alphabet of $M$.

If one is not satisfied with the assumption that $G$ is made available as
a single quantum gate from which the quantum circuits that simulate a given
quantum Turing machine can be constructed, then the cost of implementing or
approximating $G$ must be considered.
This issue is, of course, not specific to the simulation of quantum Turing
machines, but rather is a more fundamental issue---and for this reason we will
not discuss it in depth.
However, we do mention a couple of points regarding this issue that some
readers may find to be helpful:

\begin{enumerate}
\item[1.]
  The operation $G$ can be implemented exactly using a constant number
  of two-qubit gates, provided that one assumes that controlled-NOT gates and
  arbitrary single-qubit gates are available, through the method of
  \cite{BarencoBCDMSSSW95}.
  The single-qubit gates required for an exact implementation naturally depend
  on the values $\delta(p,a)[q,b,D]$ taken by the transition function of $M$.
  As $G$ is constant in size, the same complexity bounds described above
  remain valid in this case:
  the simulation requires size $O(t^2)$, depth $O(t)$, and is performed by
  logarithmic-space uniformly generated families of quantum circuits composed
  of controlled-NOT gates and a finite number of single-qubit gates
  (the selection of which depends on $M$).

\item[2.]
  If one is instead interested in a simulation with overall error $\varepsilon$
  using a fixed universal set of gates, then each $G$ must be implemented with
  accuracy on the order of $O(\varepsilon/t^2)$.
  This is possible with circuits of size polylogarithmic in $t$ and
  $1/\varepsilon$ by means of the Solovay--Kitaev theorem.
  The size and depth of the simulation in this case is as above, but multiplied
  by this polylogarithmic factor.
  This is true for arbitrary choices of the complex numbers
  $\delta(p,a)[q,b,D]$ that define the transition function of~$M$;
  the additional cost that may be incurred by difficult-to-compute numbers
  is paid only in circuit uniformity.
  That is, if computing highly accurate approximations of these numbers is
  computationally difficult, then the same will be true of computing accurate
  approximations of $G$ by a fixed gate set.
  On the other hand, the Solovay--Kitaev theorem is known to have a
  computationally efficient constructive proof \cite{KitaevSV02,DawsonN06}, and
  if accurate approximations of the complex numbers defining the transition
  function can be efficiently computed, then the same will be true of the
  circuits approximating $G$.
\end{enumerate}

\subsection*{Differences with Yao's original simulation}

Disregarding extremely minor, inconsequential differences in the way that the
simulations encode information, the key difference between the simulation
described above and Yao's original simulation is that the operation $G$ is
different in the two simulations.
Yao's simulation is similar to the one presented above in that $G$ is applied
to the three register pair neighborhood associated with each tape square,
and this operation must be applied once for each tape square in order to
simulate one step of the quantum Turing machine's computation.
Yao takes $G$ so that it directly implements the action of the Turing machine
when the tape head presence is indicated by the middle register pair:
\begin{equation}
  \begin{aligned}
    G: \ket{0,a_1}\ket{p_2,a_2}\ket{0,a_3}
    \mapsto \hspace{-4.5cm}\\[2mm]
    & \phantom{+}
    \sum_{\substack{q_2\in\{1,\ldots,m\}\\ b_2\in\Gamma}}
    \delta(p_2,a_2)[q_2,b_2,-1]
    \ket{-q_2,a_1}\ket{0,b_2}\ket{0,a_3}\\
    & +
    \sum_{\substack{q_2\in\{1,\ldots,m\}\\ b_2\in\Gamma}}
    \delta(p_2,a_2)[q_2,b_2,+1]
    \ket{0,a_1}\ket{0,b_2}\ket{-q_2,a_3}
  \end{aligned}
\end{equation}
for each $p_2\in Q$ and $a_1,a_2,a_3\in\Gamma$.
(Yao actually does this for quantum Turing machines allowing for stationary
tape heads, but this is the form for quantum Turing machines that disallow
for stationary tape heads.)

The operation $G$ is then further constrained so that it acts trivially on a certain subspace. This requires that one considers the orthogonality relations induced by the unitary global evolution of $M$, specifically among the states obtained when
$M$ is run on configurations in which the tape head has distance one or two from the cell represented by the middle triple upon which $G$ acts. Yao does not explicitly describe $G$, but observes that it may be obtained
through basic linear algebra.

As analyzed by Yao, this leads to the correctness of a cascading construction,
where the instances of $G$ are executed from left to right.
However, is not difficult to see that if one wishes to parallelize this
construction, it can be done so through a minor symmetry-inducing change in the
definition of the subspace that $G$ acts trivially on.
After that, a given instance of $G$ will commute with those instances applied
to the two overlapping neighborhoods consisting of three register pairs. 

\subsection*{Simulating variants of quantum Turing machines}

The simulation method described above can be applied to variants of quantum
Turing machines that exhibit a local causal behavior similar to ordinary
(one-dimensional tape) quantum Turing machines.
A very simple example, which we have already noted was considered in Yao's
original paper, is that of a quantum Turing machine whose tape head may remain
stationary, so that its transition function takes the form
\begin{equation}
  \delta: Q\times \Gamma \rightarrow \complex^{Q\times\Gamma\times\{-1,0,+1\}}.
\end{equation}
There is essentially no difference in the analysis of this case from the one
presented above, except that an explicit specification of the operation $G$,
which is easily obtained from the equation \eqref{eq:equation-defining-G},
may be slightly more complicated than the one described above.
This lack of substantial differences is a consequence of the fact that quantum
Turing machines allowing for stationary heads have precisely the same causal
structure that was required to invoke Theorem~\ref{theorem:localizability}.

A different example that better illustrates the flexibility of the simulation
method we have described is that of quantum Turing machines having
multi-dimensional tapes.\footnote{The word \emph{tape} is perhaps a poor
  choice of a word to describe a multi-dimensional storage medium, but little
  would be gained in introducing a different term for such an object.}
It is not necessarily our intention to advocate further study of this arguably
contrived quantum Turing machine variant---the discussion that follows is meant
only to support the claim that our simulation and its analysis extend without
complications to models other than the standard quantum Turing machine model
with a single one-dimensional tape.
We will consider just two-dimensional tapes in the interest of simplicity, but
it will be apparent that the discussion may be extended to tapes of any
constant dimension.

A natural way to define a quantum Turing machine with a two-dimensional
tape is by a transition function of the form
\begin{equation}
  \delta: Q\times \Gamma \rightarrow
  \complex^{Q\times\Gamma\times\{-1,0,+1\}\times\{-1,0,+1\}}.
\end{equation}
The interpretation of such a function is that the complex number
\begin{equation}
  \delta(p,a)[q,b,D_1,D_2]
\end{equation}
indicates the amplitude with which the machine will, when in state $p$ and
scanning a tape square containing the symbol $a$, change state to $q$,
overwrite the symbol in the square being scanned with $b$, and move its tape
head in the direction $(D_1,D_2)$ on the tape.
(For example, $(-1,+1)$ indicates a diagonal tape head movement, up and to the
left.)
A configuration of a two-dimensional tape Turing machine having state set $Q$
and tape alphabet $\Gamma$ is represented by a triple $(p,(i,j),T)$, where
$p\in Q$ is a state, $(i,j) \in \integer\times\integer$ is a pair of integers
representing the tape head location, and
$T:\integer\times\integer\rightarrow\Gamma$ is a function with finite support
that describes the contents of the two-dimensional tape.
To simulate such a quantum Turing $M$ for $t$ steps on an input of length
$n\leq t$, it may be imagined that the machine runs on a tape in the form of a
torus indexed by $\integer_N\times\integer_N$, for $N = 2t+1$, which is the
natural two-dimensional analogue of a loop in one dimension.

The quantum circuit simulation described earlier naturally extends to this
situation, with one register pair $(\reg{S}_{i,j},\reg{T}_{i,j})$ being defined
for each tape square.
The roles played by these registers are similar to before:
$\reg{S}_{i,j}$ indicates whether or not the tape head is present at square
$(i,j)$, and the state (active or inactive) if it is, and $\reg{T}_{i,j}$
represents the tape symbol stored in tape square $(i,j)$.
The transition function of $M$ is assumed to define a unitary evolution, which
in turn defines a unitary operator $V$ acting on the state space $\K_N$
(corresponding to the $N^2$ register pairs just described) in a similar manner
to the one-dimensional case.
This unitary operator acts trivially on the subspace of $\K_N$ orthogonal to
the standard basis states in which precisely one register $\reg{S}_{i,j}$
contains a positive value and none contain negative values.
The operator $F$, which may be applied to any register $\reg{S}_{i,j}$, and
the projection $\Pi$ onto the subspace of $\K_N$ spanned by standard basis
states in which exactly one register $\reg{S}_{i,j}$ contains a nonzero value,
are defined in the same way as in the one-dimensional case.

In the one-dimensional case, the operation $G$ acts on three register pairs,
but in the two-dimensional case $G$ acts on nine register pairs.
Specifically, for each index $(i,j)$, the corresponding $G$ operation acts on
the register pairs having indices in the set
\begin{equation}
  \bigl\{(i',j')\,:\,\abs{i-i'}\leq 1,\; \abs{j-j'}\leq 1\bigr\},
\end{equation}
which is equivalent to the set containing $(i,j)$ and its 8 nearest neighbors
on the torus $\integer_N\times\integer_N$.
The register pairs indexed by elements in this set may be collected into
compound registers as $\reg{X} = (\reg{S}_{i,j},\reg{T}_{i,j})$ and
\begin{equation}
  \reg{Y} =
  \left(
  \begin{array}{ccc}
    (\reg{S}_{i-1,j-1},\reg{T}_{i-1,j-1}) & 
    (\reg{S}_{i,j-1},\reg{T}_{i,j-1}) &
    (\reg{S}_{i+1,j-1},\reg{T}_{i+1,j-1}) \\[2mm]
    (\reg{S}_{i-1,j},\reg{T}_{i-1,j}) & 
     &
    (\reg{S}_{i+1,j},\reg{T}_{i+1,j}) \\[2mm]
    (\reg{S}_{i-1,j+1},\reg{T}_{i-1,j+1}) & 
    (\reg{S}_{i,j+1},\reg{T}_{i,j+1}) &
    (\reg{S}_{i+1,j+1},\reg{T}_{i+1,j+1}).
  \end{array}
  \right)
\end{equation}
Taking $\reg{Z}$ to include all of the remaining register pairs aside from the
ones included in $\reg{X}$ and $\reg{Y}$, one may apply
Theorem~\ref{theorem:localizability} to conclude that there exists a unitary
operator $G$ acting on $(\reg{X},\reg{Y})$ such that
\begin{equation}
  \Pi (G\otimes\I_{\Z}) \Pi = \Pi
  V^{\ast}((F\otimes\I)\otimes\I_{\Y\otimes\Z})V \Pi.
\end{equation}
For each possible standard basis state of the register pairs comprising
$\reg{X}$ and $\reg{Y}$, the action of $G$ may effectively be read off from
this equation.
One observes the following:
\begin{enumerate}
\item[1.]
  For standard basis states of these registers for which none of the
  registers $\reg{S}_{i,j}$ included in $\reg{X}$ and $\reg{Y}$ is nonzero, the
  action of $G$ will necessarily be trivial.
\item[2.]
  For standard basis states of these registers for which two or more of the
  registers $\reg{S}_{i,j}$ included in $\reg{X}$ and $\reg{Y}$ are nonzero,
  the action of $G$ may be taken to be trivial.
  (The above equation implies that the subspace spanned by such states must be
  invariant under the action of $G$, but taking $G$ to be the identity on this
  space is the simplest choice.)
\item[3.]
  For all remaining standard basis states, one may assume that every register
  of $\reg{Z}$ is in the standard basis state $(0,0)$ for simplicity, although
  this choice will have no influence on the action that is recovered for $G$.
  By Theorem~\ref{theorem:localizability}, the action of the operator on the
  right-hand side of the above equation will uniquely specify the action of $G$
  on the chosen standard basis state; and moreover the same theorem implies
  that $G$ will be unitary.
\end{enumerate}

As in the one-dimensional case, the action of $G$ is independent of the choice
of $(i,j)$, and the operators obtained by applying $G$ to two neighborhoods
corresponding to distinct choices of $(i,j)$ will necessarily commute, even
when the neighborhoods overlap.
A simulation of $M$ is obtained as before, by alternating between the
application of $G$ to the nine register pairs corresponding to the
neighborhoods of $(\reg{S}_{i,j},\reg{T}_{i,j})$ for every pair $(i,j)$ with
the application of $F$ to every register $\reg{S}_{i,j}$.
This time the size of the resulting circuits is $O(t^3)$ rather than $O(t^2)$,
while the depth remains linear in $t$.

Yao's original simulation can also be extended to quantum Turing machines with
multi-dimensional tapes, although once again one is required to perform
computations based on simple linear algebra---which become increasingly tedious
as the dimension of the tape grows---to obtain a description of $G$.

\bibliographystyle{alpha}
\bibliography{QTM}

\newcommand{\etalchar}[1]{$^{#1}$}
\begin{thebibliography}{BBC{\etalchar{+}}95}

\bibitem[ADH97]{AdlemanDH97}
L.~Adleman, J.~DeMarrais, and M.~Huang.
\newblock Quantum computability.
\newblock {\em SIAM Journal on Computing}, 26(5):1524--1540, 1997.

\bibitem[ANW11]{ArrighiNW11}
P.~Arrighi, V.~Nesme, and R.~Werner.
\newblock Unitarity plus causality implies localizability.
\newblock {\em Journal of Computer and System Sciences}, 77(2):372--378, 2011.

\bibitem[BBC{\etalchar{+}}95]{BarencoBCDMSSSW95}
A.~Barenco, \, C.~Bennett, \, R.~Cleve, \, D.~DiVincenzo, \, N.~Margolus, \,
  P.~Shor, \, T.~Sleator, \, J.~Smolin, \, and H.~Weinfurter.
\newblock Elementary gates for quantum computation.
\newblock {\em Physical Review A}, 52:3457--3467, 1995.

\bibitem[BV93]{BernsteinV93}
E.~Bernstein and U.~Vazirani.
\newblock Quantum complexity theory (preliminary abstract).
\newblock In {\em Proceedings of the 25th Annual ACM Symposium on Theory of
  Computing}, pages 11--20, 1993.

\bibitem[BV97]{BernsteinV97}
E.~Bernstein and U.~Vazirani.
\newblock Quantum complexity theory.
\newblock {\em SIAM Journal on Computing}, 26(5):1411--1473, 1997.

\bibitem[Con00]{conway2000course}
J.~Conway.
\newblock {\em A Course in Operator Theory}.
\newblock American Mathematical Society, 2000.

\bibitem[Deu85]{Deutsch85}
D.~Deutsch.
\newblock Quantum theory, the {Church}--{Turing} principle and the universal
  quantum computer.
\newblock {\em Proceedings of the Royal Society of London}, A400:97--117, 1985.

\bibitem[Deu89]{Deutsch89}
D.~Deutsch.
\newblock Quantum computational networks.
\newblock {\em Proceedings of the Royal Society of London}, A425:73--90, 1989.

\bibitem[DJ92]{DeutschJ92}
D.~Deutsch and R.~Jozsa.
\newblock Rapid solutions of problems by quantum computation.
\newblock {\em Proceedings of the Royal Society of London}, A439:553--558,
  1992.

\bibitem[DN06]{DawsonN06}
C.~Dawson and M.~Nielsen.
\newblock The {Solovay}--{Kitaev} algorithm.
\newblock {\em Quantum Information and Computation}, 6(1):81--95, 2006.

\bibitem[KSV02]{KitaevSV02}
A.~Kitaev, A.~Shen, and M.~Vyalyi.
\newblock {\em Classical and Quantum Computation}, volume~47 of {\em Graduate
  Studies in Mathematics}.
\newblock American Mathematical Society, 2002.

\bibitem[Nis00]{OzawaN00}
M.~Ozawa~H. Nishimura.
\newblock Local transition functions of quantum {Turing} machines.
\newblock {\em RAIRO-Theoretical Informatics and Applications}, 34(5):379--402,
  2000.

\bibitem[PF79]{PippengerF79}
N.~Pippenger and M.~Fischer.
\newblock Relations among complexity measures.
\newblock {\em Journal of the ACM}, 26(2):361--381, 1979.

\bibitem[Sav72]{Savage72}
J.~Savage.
\newblock Computational work and time on finite machines.
\newblock {\em Journal of the ACM}, 19(4):660--674, 1972.

\bibitem[Sho94]{Shor94}
P.~Shor.
\newblock Algorithms for quantum computation: discrete logarithms and
  factoring.
\newblock In {\em Proceedings of the 35th Annual IEEE Symposium on Foundations
  of Computer Science}, pages 124--134, 1994.

\bibitem[Sim94]{Simon94}
D.~Simon.
\newblock On the power of quantum computation.
\newblock In {\em Proceedings of the 35th Annual IEEE Symposium on Foundations
  of Computer Science}, pages 116--123, 1994.

\bibitem[Sim97]{Simon97}
D.~Simon.
\newblock On the power of quantum computation.
\newblock {\em SIAM Journal on Computing}, 26(5):1474--1483, 1997.

\bibitem[Tak13]{takesaki2013theory}
M.~Takesaki.
\newblock {\em Theory of Operator Algebras II}, volume 125 of {\em Encyclopedia
  of Mathematical Sciences}.
\newblock Springer, 2013.

\bibitem[Tur37]{Turing37}
A.~Turing.
\newblock On computable numbers, with an application to the
  {Entscheidungsproblem}.
\newblock {\em Proceedings of the London Mathematical Society (second series)},
  42(1):230--265, 1937.

\bibitem[Yao93]{Yao93}
A.~Yao.
\newblock Quantum circuit complexity.
\newblock In {\em Proceedings of the 34th Annual IEEE Symposium on Foundations
  of Computer Science}, pages 352--361, 1993.

\end{thebibliography}

\end{document}